\def\today{\ifcase\month\or
  January\or February\or March\or April\or May\or June\or
  July\or August\or September\or October\or November\or December\fi
  \space\number\day, \number\year}
\newtheorem{Theorem}{Theorem}[section]
\newtheorem{Lemma}[Theorem]{Lemma}
\theoremstyle{definition}
\begin{document}
\title[classical solutions to
Boltzmann with soft potentials]{Distributional and classical solutions to the Cauchy
Boltzmann problem for soft potentials with integrable angular cross section}
\author[R.J. Alonso and I.M. Gamba]{Ricardo J. Alonso and Irene M. Gamba}
\thanks{The authors acknowledge partial  support from NSF grant DMS-0636586 and DMS 0807712. Support from the Institute
from Computational Engineering and Sciences at the University
of Texas at Austin is also gratefully acknowledged.
 }
\date{\today}
\subjclass[2000]{76P05, 35D05}
\keywords{Boltzmann equation for soft potentials; generalized and classical solutions, stability in $L^p$ spaces}
\address{Dept. of Computational \&
Applied Mathematics,
Rice University,
Houston, TX 77005-1892}
\email{Ricardo.J.Alonso@rice.edu}
\address{Department of Mathematics \& ICES, University of Texas at Austin, Austin, TX 78712-1082.}
\email{gamba@math.utexas.edu}
\allowdisplaybreaks
\numberwithin{equation}{section}


\begin{abstract}
This paper focuses on the study of existence and uniqueness of distributional and classical solutions
to the Cauchy Boltzmann problem for the soft potential case assuming $S^{n-1}$ integrability of the
angular part of the collision kernel (Grad cut-off assumption).  For this purpose we revisit the Kaniel--Shinbrot
iteration technique to present an elementary proof of existence and uniqueness results that includes the large data near local Maxwellian regime with possibly infinite initial mass. We study the propagation of regularity using a recent estimate for the positive collision operator given in \cite{alonso-carneiro-gamba}, by E. Carneiro and the authors, that allows us to show such propagation without additional conditions on the collision kernel.  Finally, an $L^{p}$-stability result (with $1\leq p\leq\infty$) is presented assuming the aforementioned condition.
\end{abstract}

\maketitle


\section{Introduction}

The purpose of this work is to study  the standard model in the kinetic theory of gases given by the Boltzmann
Transport Equation in the particular case of soft potentials (i.e. collision kernels with singular forms of the relative speed)
under the   $S^{n-1}$ integrability assumption in the
angular part of the collision kernel (Grad cut-off assumption).

More specifically, we study the classical Cauchy Boltzmann problem: Find a function $f(t,x,v)\geq0$ that
solves the equation
\begin{equation}\label{ie1}
\frac{\partial{f}}{\partial{t}}+v\cdot\nabla_x{f}=Q(f,f)\;\;\mbox{in}\;\;(0,+\infty)\times\mathbb{R}^{2n}
\end{equation}
and with initial condition $f(0,x,v)=f_{0}(x,v)$.

Much of the difficulty in solving this problem comes from the complexity of the collision operator $Q$ which
is defined for any two (suitable) functions $f$ and $g$ by the expression
\begin{equation}\label{ie2}
Q(f,g):=\int_{\mathbb{R}^{n}}\int_{S^{n-1}}\left\{f('\!v)g('\!v_{*})-f(v)g(v_{*})\right\}\; B(|u|,\hat{u}\cdot\sigma)\;d\sigma dv_{*},
\end{equation}
where the symbols $'\!v,\,'\!v_{*},\,u$ are defined by
\begin{equation*}
{'\!v}=v-({u}\cdot{\sigma})\;\sigma\ ,\ \ \ {'\!v_{*}}=v_{*}+({u}\cdot{\sigma})\;\sigma\ \ \mbox{and}\ \ \ u=v-v_{*}.
\end{equation*}
In addition, we use  $\hat{x}$ to denote the unitary vector in the direction of $x$,
for $x\in\mathbb{R}^{n}$. The function $B\geq0$ is known as cross section collisional kernel and depends
on the type of interaction
between the particles of the gas. It is usually model as
\begin{equation*}
B(|u|,\hat{u}\cdot\sigma) = {\mathcal G}(|u|) b(\hat{u}\cdot\sigma)\, ,
\end{equation*}
where ${\mathcal G}(|u|)$ is specified below and $b(\hat{u}\cdot\sigma)$ is $S^{n-1}$-integrable with respect to $\sigma$.

For a detailed presentation on the physical meaning of the equation (\ref{ie1})
see the reference \cite{cercignani-illner-pulvirenti}.\\

Solutions to this problem are known to exists in the \textit{renormalized sense} for initial data
with finite mass, energy and entropy.  We refer to \cite{diperna-lions} for the study of renormalized solutions and
to \cite{alexandre-villani}, \cite{hamdache1}, \cite{mischler} and \cite{villani} for further developments in the
theory of very weak solutions.

Existence of distributional solutions (i.e. mild or weak solutions) when the initial data is ``small'' in some sense,
or when is locally ``close'' to the equilibrium (Maxwell distribution) is also well documented.
The theory of distributional solutions for the inhomogeneous Boltzmann case for small initial data
started in the early 80's with the work of Illner and Shinbrot \cite{illner-shinbrot} who adapted the
technique proposed in \cite{kaniel-shinbrot} to produce global in time solutions in the hard potential case.
More recent developments on this theory, including the treatment of the soft potential case, can be found in
\cite{bellomo-toscani}, \cite{hamdache} and \cite{toscani}.\\

There is also some work in the near local equilibrium case (i.e. near a local Maxwellian distribution).
We refer to \cite{palczewski-toscani} for a treatment for the existence and uniqueness of distributional
solutions in the case of hard spheres.  Meanwhile in the references \cite{goudon}, \cite{toscani1} the soft
potential case is discussed including the trend to equilibrium.\\

In the near global equilibrium case (i.e near a global Maxwellian distribution), the literature is focused on the existence and uniqueness of classical solutions in the soft potential case.  The near global equilibrium theory greatly differs from the previous perturbative theories, such as near vacuum and near local Maxwellian, and was started with independent works of Caflisch \cite{caflisch} and Ukai-Asano \cite{ukai-asano} in early 80's.  We refer to \cite{guo1} for recent developments using energy methods to prove existence and uniqueness of classical solutions in the full
soft potential range.

Regarding stability of solutions, Ha has shown the $L^{1}$-stability of solutions,
see \cite{ha0} and \cite{ha1}, in regimes near local Maxwellian distributions as well as near
vacuum.  His results range soft and hard potentials and are valid for the pointwise cutoff
condition $b(\hat{u}\cdot\sigma)\leq K\cos(\hat{u}\cdot\sigma)$.\\

It is worth to mention that the existence of distributional (mild or weak) solutions in the
full soft potential range is an open problem. This problem as well as the propagation of regularity
to produce a global in time estimate on the spatial gradient of
solutions and
the $L^{p}$-stability,  with $1\leq p\leq\infty$,  are the problems treated in this manuscript for initial states
near local Maxwellians or vacuum states, all in the case  without pointwise cut-off or
regularization in the collision kernel.

More specifically, we treat both the distributional and classical theory in a more
unified way for the cases of near vacuum and near local Maxwellian distributions.  We will use techniques that simplify existing proofs and generalize the assumptions on the collision kernel, i.e. avoiding pointwise cut-off or regularization, especially when obtaining classical solutions.  In addition, we reduce the smoothness assumptions on the initial data in order to obtain smooth classical solutions
in both the small data and near local equilibrium cases maintaining minimal regularity on the collision kernel.
As for stability of solutions, we prove under the Grad's cutoff assumption the $L^{p}$-stability ($1\leq p\leq\infty$)
of the solutions found in the aforementioned cases.
The result presented will hold for soft potentials and Maxwell molecules.\\

This paper is organized as follows: In section 2, distributional solutions for the Cauchy Boltzmann problem are constructed under the condition of ``small'' initial data.  A brief discussion of Kaniel-Shinbrot iteration is presented before the main result is proved.  This result of existence and uniqueness applies for both soft and hard potentials.  For the remainder of the paper the proof will be focused only in the soft potential case, thus, in section 3 solutions for the Boltzmann Cauchy problem are constructed for the near equilibrium case.  More specifically, they are build under the assumption that the initial data is ``close'' to the local Maxwellian ($0<\alpha,\;0\leq\beta$)
\begin{equation}\label{lm}
\exp\left(-\alpha\left|x-v\right|^{2}-\beta\left|v\right|^{2}\right).
\end{equation}
The main idea of this discussion is taken from reference \cite{toscani1} and, as in section 2,  uses the Kaniel-Shinbrot iteration approach.  We construct a lower and upper Maxwellian distribution barriers to implement the iteration.  Although our methodology is similar to that of \cite{toscani1}, we have a more direct approach that leads to a relaxation on the conditions imposed to the barriers, in particular, we permit lower and upper barrier with different decay toward infinity.  We note that reference \cite{goudon} also allows different decay for the lower and upper Maxwellian distribution barriers using a functional fixed point argument.  We follow the methodology given in \cite{toscani1} since it is more direct and no additional proof is required to obtain uniqueness of solutions.\\

In section 4, we study the $L^{p}$-propagation of the solution's gradient which permits to construct classical solutions of equation (\ref{ie1}) with natural regularity assumptions on the initial datum, in particular, we assume $\nabla_{x} f_{0}\in L^{p}(\mathbb{R}^{2n})$ for some $1<p<\infty$.  We use a new estimate developed by Carneiro and the authors \cite{alonso-carneiro, alonso-carneiro-gamba} that allows to study the propagation of regularity without pointwise cut-off or regularization in the collision kernel, and to produce a global in time estimate on the spatial gradient of solutions.
In this way we generalize some of the aspects treated in \cite{boudin-desvillettes}.  Furthermore, we address the propagation of regularity in the velocity variable by giving a global in time estimate in the velocity gradient.\\

Finally, two elementary results on stability are shown: stability of solutions in the space of functions uniformly bounded by Maxwellian distributions in the near vacuum case, and a $L^{p}$-stability of solutions ($1\leq p\leq\infty$) in both cases.  The latter is proved under the Grad's cutoff condition, see \textit{(A.2)} below, on the angular kernel using the techniques previously presented in that section.

\subsection{Assumption on the model}
Assume that the collision kernel $B(|u|,\hat{u}\cdot\sigma)$ satisfies
\begin{itemize}
\item [\it(A.1)] $B(|u|,\hat{u}\cdot\sigma)=|u|^{-\lambda}\ b(\hat{u}\cdot\sigma)$ with $0\leq\lambda<n-1$.
\item [\it(A.2)] Grad's assumption: $b(\hat{u}\cdot\sigma)\in L^{1}(S^{n-1})$.  We will denote
\begin{displaymath}
\left\|b\right\|_{L^{1}(S^{n-1})}=\int_{S^{n-1}}b(\hat{u}\cdot\sigma)d\sigma.
\end{displaymath}
\end{itemize}
Grad's assumption allows to split the collision operator in a gain and a loss part, namely,
\begin{displaymath}
Q(f,g)=Q_{+}(f,g)-Q_{-}(f,g)
\end{displaymath}
with obvious definitions for each part.  Moreover, the negative part can be expressed as
\begin{displaymath}
Q_{-}(f,g)=f\ R(g),
\end{displaymath}
where
\begin{align}\label{ie3}
R(g)=\int_{\mathbb{R}^{n}}\int_{S^{n-1}}g(v_{*})|u|^{-\lambda}&b(\hat{u}\cdot\sigma)d\sigma dv_{*}\nonumber\\
=\left\|b\right\|_{L^{1}(S^{n-1})}&\int_{\mathbb{R}^{n}}g(v_{*})|u|^{-\lambda}dv_{*}=\left\|b\right\|_{L^{1}(S^{n-1})}\;g\ast|v|^{-\lambda}.
\end{align}
\subsection{Notation and spaces}
We first introduce some functional spaces where we seek and study solutions to our problem.

Let $M_{\alpha,\beta}(x,v):=\exp\left(-\alpha|x|^{2}-\beta|v|^{2}\right)$ and define the
space of functions bounded by a space-velocity Maxwellian, denoted by $\mathcal{M_{\alpha,\;\beta}}$, as
\begin{equation*}
\mathcal{M_{\alpha,\beta}}=L^{\infty}(\mathbb{R}^{2n},\;M^{-1}_{\alpha,\;\beta})\;,
\end{equation*}
endowed with the norm
\begin{equation*}
\left\|f\right\|_{\alpha,\beta}=\left\|f\;M^{-1}_{\alpha,\beta}\right\|_{L^{\infty}(\mathbb{R}^{2n})}.
\end{equation*}
Let $X$ be a Banach space.  Define $W^{1,1}(0,T;X)$ as those functions $f\in L^{1}(0,T;X)$
such that its time derivative $f_{t}$ exists in the weak sense and belongs to $L^{1}(0,T;X)$.
This space is endowed with the following norm
\begin{equation*}
\left\|f\right\|_{W^{1,1}(0,T;X)}=\left\|f\right\|_{L^{1}(0,T;X)}+\left\|f_{t}\right\|_{L^{1}(0,T;X)}.
\end{equation*}
\section{Distributional solutions for small initial data}
In order to apply the Kaniel and Shinbrot iteration technique
we to introduce the (well known) trajectory operator~$\#$
\begin{equation*}
f^{\#}(t,x,v):=f(t,x+tv,v).
\end{equation*}
This operator gives the evaluation along the trajectories of the transport operator $\partial_{t}+v\cdot\nabla$.
Hence, equation (\ref{ie1}) reduces to
\begin{equation}\label{sde1}
\frac{df^{\#}}{dt}(t)=Q^{\#}(f,f)(t)\;\;\mathrm{with}\;\;f(0)=f_{0}.
\end{equation}
\textbf{Definition.}  A \textit{distributional solution} in $[0,T]$ of problem (\ref{ie1}) is a function $f\in W^{1,1}(0,T;L^{\infty}(\mathbb{R}^{2n}))$ that solves (\ref{sde1}) a.e. in $(0,T]\times\mathbb{R}^{2n}$.\\

Equation (\ref{sde1}) is a good starting point to define the concept of solution because it does not requires differentiability in the \textit{x}-variable for $f$, but equation (\ref{ie1}) does.  Moreover, if $f$ is smooth in the \textit{x}-variable equations (\ref{ie1}) and (\ref{sde1})  the notion of solutions are equivalent in the sense that $f$ is a solution of the former if and only if is a solution of the later.  In other words, equation (\ref{sde1}) is a relaxed version of equation (\ref{ie1}).\\

The concept of distributional (or \textit{mild}) solution is suited to apply a technique introduced at the end of the 70's by Kaniel and Shinbrot \cite{kaniel-shinbrot}.  This technique was first applied for these authors to find local in time mild solution for the Boltzmann equation.  Later, it has been used with success to find global distributional solutions in the context of small initial data for the cases of elastic hard spheres in 3-dimension \cite{illner-shinbrot}, relativistic Boltzmann \cite{glassey}, and recently the inelastic Boltzmann \cite{alonso}.\\

\noindent\textbf{Kaniel and Shinbrot iteration:} In order to present Kaniel and Shinbrot technique we define the sequences $\{l_{n}(t)\}$ and $\{u_{n}(t)\}$ as the mild solutions of the linear problems
\begin{align}
&\frac{dl_{n}^{\#}}{dt}(t)+Q^{\#}_{-}(l_{n},u_{n-1})(t)=Q^{\#}_{+}(l_{n-1},l_{n-1})(t)\ \ \mbox{and}\nonumber\\
&\frac{du_{n}^{\#}}{dt}(t)+Q^{\#}_{-}(u_{n},l_{n-1})(t)=Q^{\#}_{+}(u_{n-1},u_{n-1})(t),\label{sde2}
\end{align}
with $0\leq l_{n}(0)\leq f_{0}\leq u_{n}(0)$.  The construction begins by choosing a pair of functions $(l_{0},u_{0})$ satisfying what Kaniel and Shinbrot called \textit{the beginning condition} in $[0,T]$, i.e. $u^{\#}_{0}\in L^{\infty}(0,T;\mathcal{M_{\alpha,\beta}})$ and
\begin{equation}\label{sde3}
0\leq l^{\#}_{0}(t)\leq l^{\#}_{1}(t)\leq u^{\#}_{1}(t)\leq u^{\#}_{0}(t)\;\;\mbox{a.e. in}\;\; 0\leq t\leq T.
\end{equation}
We summarize the results in \cite{kaniel-shinbrot} with the following theorem
\begin{Theorem}\label{KS}
Let the collision kernel $B$ satisfy the assumptions (A.1)--(A.2) with condition (A.1) relaxed to $-1\leq\lambda<n-1$, and let $\{l_{n}(t)\}$ and $\{u_{n}(t)\}$ be the sequences defined by the mild solutions of the linear problems (\ref{sde2}).  In addition, assume that the beginning condition (\ref{sde3}) is satisfied in $[0,T]$, then
\begin{itemize}
\item [\it(i)] The sequences $\{l_{n}(t)\}$ and $\{u_{n}(t)\}$ are well defined for $n\geq1$.  In addition, $\{l_{n}(t)\}$, $\{u_{n}(t)\}$ are increasing and decreasing sequences respectively, and
\begin{equation*}
l^{\#}_{n}(t)\leq u^{\#}_{n}(t)\;\;\mbox{a.e. in}\;\; 0\leq t\leq T.
\end{equation*}
\item [\it(ii)] If $0\leq l_{n}(0)=f_{0}=u_{n}(0)$ for $n\geq1$, then
\begin{equation*}
\lim_{n\rightarrow\infty}l_{n}(t)=\lim_{n\rightarrow\infty}u_{n}(t)=f(t)\ \ \mbox{a.e. in}\ [0,T].
\end{equation*}
The limit $f(t)\in C(0,T;\mathcal{M}_{\alpha,\beta})$ is the unique distributional solution of the Boltzmann equation in $[0,T]$ and fulfills
\[
0\leq l^{\#}_{0}(t)\leq f^{\#}(t)\leq u^{\#}_{0}(t)\ \mbox{a.e. in}\ [0,T].
\]
\end{itemize}
\end{Theorem}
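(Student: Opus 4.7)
The plan is to exploit the linearity of (\ref{sde2}) in $l_n$ and $u_n$---made possible by the identity $Q_-(f,g)=f\,R(g)$ from (\ref{ie3})---to reduce each iteration step to a scalar linear ODE along the transport characteristics, and then establish the monotonicity and sandwich inequalities inductively using the positivity of $Q_+$ and $R$ in each argument separately.

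For part (i), the equation for $l_n^{\#}$ in (\ref{sde2}) reads
\[
\frac{d l_n^{\#}}{dt}(t) + l_n^{\#}(t)\, R^{\#}(u_{n-1})(t) = Q_+^{\#}(l_{n-1},l_{n-1})(t),
\]
a scalar linear first-order ODE in $t$ whose Duhamel representation with integrating factor $\exp\bigl(\int_0^{t} R^{\#}(u_{n-1})\,ds\bigr)$ gives
\[
l_n^{\#}(t) = l_n(0)\, e^{-\int_0^t R^{\#}(u_{n-1})\,ds} + \int_0^t e^{-\int_s^t R^{\#}(u_{n-1})\,d\tau}\, Q_+^{\#}(l_{n-1},l_{n-1})(s)\,ds,
\]
and analogously for $u_n^{\#}$ with damping $R^{\#}(l_{n-1})$ and source $Q_+^{\#}(u_{n-1},u_{n-1})$. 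Since $u_0^{\#}\in L^{\infty}(0,T;\mathcal{M}_{\alpha,\beta})$ and $\lambda<n-1$, the convolution $u_0\ast|v|^{-\lambda}$ in (\ref{ie3}) is pointwise finite, so both formulas produce well-defined non-negative iterates. I would then prove by induction on $n$ the four-term chain $l_{n-1}^{\#}\leq l_n^{\#}\leq u_n^{\#}\leq u_{n-1}^{\#}$, whose base case $n=1$ is precisely the beginning condition (\ref{sde3}). For the inductive step, $l_{n-1}\geq l_{n-2}$ and $u_{n-1}\leq u_{n-2}$ make the source $Q_+^{\#}(l_{n-1},l_{n-1})$ larger and the damping $R^{\#}(u_{n-1})$ smaller (hence the integrating factor larger), so the Duhamel formula yields $l_n^{\#}\geq l_{n-1}^{\#}$. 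The decrease of $\{u_n^{\#}\}$ is symmetric, and $l_n^{\#}\leq u_n^{\#}$ follows by subtracting the two Duhamel identities and invoking $l_{n-1}\leq u_{n-1}$ together with the bilinearity of $Q_+$ and positivity of $R$.

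For part (ii), monotonicity plus the uniform bound by $u_0^{\#}$ yields pointwise a.e.\ limits $l^{\#}$ and $u^{\#}$ with $0\leq l_0^{\#}\leq l^{\#}\leq u^{\#}\leq u_0^{\#}$, and dominated convergence lets me pass to the limit in the Duhamel formulas. When $l_n(0)=f_0=u_n(0)$ for $n\geq1$, both limits satisfy
\[
f^{\#}(t) = f_0\,e^{-\int_0^t R^{\#}(\tilde{f})\,ds} + \int_0^t e^{-\int_s^t R^{\#}(\tilde{f})\,d\tau}\, Q_+^{\#}(f,f)(s)\,ds,
\]
with $\tilde{f}=u$ when $f=l$ and $\tilde{f}=l$ when $f=u$. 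Setting $w:=u-l\geq0$, subtracting the two identities, and expanding $Q_+(u,u)-Q_+(l,l)=Q_+(u,w)+Q_+(w,l)$ while using the $u_0^{\#}\in\mathcal{M}_{\alpha,\beta}$ bound, I obtain an integral inequality of the form
\[
\|w^{\#}(t)\|_{\alpha,\beta}\leq C(T)\int_0^t \|w^{\#}(s)\|_{\alpha,\beta}\,ds,
\]
so Gronwall forces $u=l=:f$. The common limit satisfies (\ref{sde1}), lies in $C(0,T;\mathcal{M}_{\alpha,\beta})$ by continuity of the Duhamel integral, and uniqueness within the class $\{0\leq g^{\#}\leq u_0^{\#}\}$ follows from the very same Gronwall estimate applied to the difference of two candidate solutions.

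The main obstacle I anticipate is the simultaneous character of the induction: the lower iterate is damped by $R^{\#}(u_{n-1})$ while the upper iterate is damped by $R^{\#}(l_{n-1})$, so the full four-term chain must be propagated together, and it is this cross-coupling---absent in a naive Picard iteration---that permits the Gronwall closure in (ii). Ancillary care is needed to verify that $R^{\#}(u_0)$ is locally integrable in $t$ and controlled by the $\mathcal{M}_{\alpha,\beta}$ norm of $u_0$; the hypothesis $\lambda<n-1$ is exactly what guarantees pointwise finiteness of the convolution against the singular kernel $|v|^{-\lambda}$ and hence closes the argument.
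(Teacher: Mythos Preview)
The paper does not actually prove this theorem: the sentence preceding the statement reads ``We summarize the results in \cite{kaniel-shinbrot} with the following theorem,'' and no argument is supplied (the remark after Theorem~\ref{nmexistence} further credits \cite{mischler-perthame} for lifting the summability hypothesis on $u_0^{\#}$). Your sketch is therefore not being compared to a competing proof but to a citation; what you have written is precisely the standard Kaniel--Shinbrot argument that the paper invokes, and it is correct in outline. The Duhamel representation, the simultaneous four-term induction $l_{n-1}^{\#}\le l_n^{\#}\le u_n^{\#}\le u_{n-1}^{\#}$, and the Gronwall closure for $w=u-l$ are exactly the ingredients of the original proof, and you have correctly flagged the cross-coupling (lower iterate damped by $R^{\#}(u_{n-1})$, upper by $R^{\#}(l_{n-1})$) as the structural point that makes the scheme work.

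Two minor remarks. First, the pointwise-in-time bound you need for the Gronwall step, namely $|Q_+^{\#}(w,u)(s)|\le C\|w^{\#}(s)\|_{\alpha,\beta}\|u^{\#}(s)\|_{\alpha,\beta}\,M_{\alpha,\beta}$, requires only $\lambda<n$ (so that $\int e^{-\beta|v_*|^2}|v-v_*|^{-\lambda}\,dv_*$ is uniformly finite), which is amply covered by $\lambda<n-1$; the stronger hypothesis $-1\le\lambda<n-1$ is what the paper needs later for Lemma~\ref{sdl1}, not for this theorem. Second, your closing caveat that uniqueness is established within the class $\{0\le g^{\#}\le u_0^{\#}\}$ is the honest statement; the theorem's phrase ``unique distributional solution'' should indeed be read as uniqueness in $C(0,T;\mathcal{M}_{\alpha,\beta})$, and your Gronwall argument delivers exactly that.
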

\noindent The following lemma, which holds for soft and hard potentials, provides the essential
estimate used in the existence of distributional solutions in the near vacuum case.
\begin{Lemma}\label{sdl1}
Assume $-1\leq\lambda<n-1$.  Then, for any $0\leq s\leq t\leq T$ and functions $f^{\#},\ g^{\#}$
that lie in $L^{\infty}(0,T;\mathcal{M}_{\alpha,\beta})$ the following inequality holds
\begin{equation}\label{sde4}\begin{split}
&\int^{t}_{s}\left|Q^{\#}_{+}(f,g)(\tau)\right|d\tau\leq \\
&\qquad\qquad k_{\alpha,\beta}
\exp\left(-\alpha|x|^{2}-\beta|v|^{2}\right)\left\|f^{\#}\right\|_{L^{\infty}(0,T;\mathcal{M}_{\alpha,\beta})}
\left\|g^{\#}\right\|_{L^{\infty}(0,T;\mathcal{M}_{\alpha,\beta})},
\end{split}\end{equation}
where
\[
k_{\alpha,\beta}=\sqrt{\pi}\;\alpha^{-1/2}\;\left\|b\right\|_{L^{1}(S^{n-1})}\left(\frac{|S^{n-1}|}
{n-\lambda-1}+C_{n}\ \beta^{-n/2}\right),
\]
with the constant $C_{n}$ depending only on the dimension.  In other words,
\begin{equation*}
\int^{T}_{0}\left|Q^{\#}_{+}(f,g)(\tau)\right|d\tau\in L^{\infty}(0,T;\mathcal{M}_{\alpha,\beta}).
\end{equation*}
\end{Lemma}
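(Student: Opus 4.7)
The plan is to reduce the integrated gain operator to a concrete Gaussian estimate by pushing the Maxwellian bound through the pre-collisional velocities and then eliminating the $\sigma$ dependence in the spatial exponent via a parallelogram identity.

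First I would write $Q_+^{\#}(f,g)(\tau,x,v) = Q_+(f,g)(\tau,\,x+\tau v,\,v)$, and for each factor in the integrand use the identity $f(\tau,y,w) = f^{\#}(\tau,\,y-\tau w,\,w)$ to replace $f$ and $g$ by their $\#$-transforms evaluated at the pre-collisional velocities $'\!v$ and $'\!v_{*}$. The pointwise Maxwellian bound gives
\[
|f('\!v)\, g('\!v_{*})| \,\le\, \|f^{\#}\|\,\|g^{\#}\|\,
\exp\!\Bigl(-\alpha\bigl|x+\tau(v-{'\!v})\bigr|^{2}-\alpha\bigl|x+\tau(v-{'\!v_{*}})\bigr|^{2}
-\beta(|{'\!v}|^{2}+|{'\!v_{*}}|^{2})\Bigr).
\]
Energy conservation turns the velocity exponent into $-\beta(|v|^{2}+|v_{*}|^{2})$, independent of $\sigma$.

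The main step, and the one I expect to be the crux of the argument, is taming the spatial exponent, which still carries $\sigma$ through $'\!v$ and $'\!v_{*}$. Setting $A=x+\tau(v-{'\!v})$ and $B=x+\tau(v-{'\!v_{*}})$ and using $v-{'\!v}=(u\cdot\sigma)\sigma$, $v-{'\!v_{*}}=u-(u\cdot\sigma)\sigma$, I would observe that $A+B=2x+\tau u$ and $A-B=\tau(2(u\cdot\sigma)\sigma-u)$, whose norm equals $\tau|u|$ because reflection across $\sigma$ is an isometry. The parallelogram identity then gives
\[
|A|^{2}+|B|^{2} \,=\, \tfrac{1}{2}|2x+\tau u|^{2}+\tfrac{1}{2}\tau^{2}|u|^{2},
\]
which is free of $\sigma$. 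Consequently the $\sigma$-integral of $b(\hat u\cdot\sigma)$ factors out as $\|b\|_{L^{1}(S^{n-1})}$.

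After the change of variables $v_{*}\mapsto u=v-v_{*}$, what remains is to integrate in $\tau$ and then in $u$. Completing the square in $\tau$ inside $\exp(-\tfrac{\alpha}{2}|2x+\tau u|^{2}-\tfrac{\alpha}{2}\tau^{2}|u|^{2})$ and using $(x\cdot u)^{2}/|u|^{2}\le |x|^{2}$ produces exactly $e^{-\alpha|x|^{2}}$ together with a Gaussian in $\tau$, whose integral over $\mathbb{R}$ contributes $\sqrt{\pi/\alpha}\,|u|^{-1}$. The residual $u$-integral is
\[
\int_{\mathbb{R}^{n}} |u|^{-\lambda-1}\,e^{-\beta|v-u|^{2}}\,du,
\]
which I would split at $|u|=1$: the inner piece is bounded by $|S^{n-1}|/(n-\lambda-1)$ (finite thanks to $\lambda<n-1$), while on $|u|>1$ we have $|u|^{-\lambda-1}\le 1$ since $\lambda\ge -1$, and the Gaussian integrates to $(\pi/\beta)^{n/2}$. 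Assembling these pieces yields the stated constant $k_{\alpha,\beta}$ with $C_{n}=\pi^{n/2}$ and the prefactor $e^{-\alpha|x|^{2}-\beta|v|^{2}}$, proving the inequality and the $L^{\infty}(0,T;\mathcal{M}_{\alpha,\beta})$ membership.
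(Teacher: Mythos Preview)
Your proof is correct and follows essentially the same route as the paper: the Maxwellian pointwise bound, reduction of the spatial exponent to a $\sigma$-free expression (the paper writes it directly as $|x|^{2}+|x+\tau u|^{2}$, which is algebraically identical to your parallelogram form), Gaussian integration in $\tau$ yielding $\sqrt{\pi/\alpha}\,|u|^{-1}$, and a near/far split of the residual convolution integral. The only cosmetic difference is that you split at $|u|=1$ whereas the paper splits at $|v_{*}|=1$; your choice is arguably cleaner since the singularity of $|u|^{-(\lambda+1)}$ sits at $u=0$.
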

\begin{proof}
An explicit calculation yields the inequality,
\begin{multline}
\left|Q^{\#}_{+}(f,g)(\tau,x,v)\right|\\
\leq e^{-\beta|v|^{2}}\left\|f^{\#}\right\|_{L^{\infty}(0,T;\mathcal{M}_{\alpha,\beta})}
\left\|g^{\#}\right\|_{L^{\infty}(0,T;\mathcal{M}_{\alpha,\beta})}\\
\int_{\mathbb{R}^{n}}e^{-\beta|v_{*}|^{2}}\int_{S^{n-1}}e^{-\alpha|x+\tau(v-'\!v)|^{2}
-\alpha|x+\tau(v-'\!v_{*})|^{2}}b(\hat{u}\cdot\sigma)d\sigma\  |u|^{-\lambda}dv_{*}.\label{sde5}
\end{multline}
Note that
\begin{equation*}
\left|x+\tau(v-'\!v)\right|^{2}+\left|x+\tau(v-'\!v_{*})\right|^{2}=
\left|x\right|^{2}+\left|x+\tau u\right|^{2},
\end{equation*}
and,
\begin{equation*}
\int^{t}_{s}e^{-\alpha\left|x+\tau u\right|^{2}}d\tau\leq\int^{\infty}_{-\infty}e^{-\alpha\left|\tau u\right|^{2}
}d\tau\leq \frac{\sqrt{\pi}}{\alpha^{1/2}}\;|u|^{-1}.
\end{equation*}
Therefore, integrating (\ref{sde5}) in $[s,t]$
\begin{align*}
&\int^{t}_{s}\left|Q^{\#}_{+}(f,g)(\tau,x,v)\right| d\tau\leq\frac{\sqrt{\pi}}{\alpha^{1/2}}
\left\|b\right\|_{L^{1}(S^{n-1})}\exp\left(-\alpha|x|^{2}-\beta|v|^{2}\right)\nonumber\\
&\qquad\qquad\  \left\|f^{\#}\right\|_{L^{\infty}(0,T;\mathcal{M}^{\alpha,\beta})}\left\|g^{\#}
\right\|_{L^{\infty}(0,T;\mathcal{M}^{\alpha,\beta})}\int_{\mathbb{R}^{n}}\exp\left(-\beta|v_{*}|^{2}
\right)\ |u|^{-(\lambda+1)}dv_{*}.
\end{align*}
Finally, the proof is completed by observing that,
\begin{align*}
\int_{\mathbb{R}^{n}}\exp\left(-\beta|v_{*}|^{2}\right)
&\ |u|^{-(\lambda+1)}dv_{*}\nonumber\\
&\leq\int_{\{|v_{*}|<1\}} |u|^{-(\lambda+1)}dv_{*}+\int_{\{|v_{*}|\geq1\}}
\exp\left(-\beta|v_{*}|^{2}\right)dv_{*}\nonumber\\
&\leq\frac{|S^{n-1}|}{n-\lambda-1}+C_{n}\ \beta^{-n/2}.
\end{align*}
\end{proof}

We are now in conditions to prove Theorem \ref{sdexistence} for the global existence of
distributional solutions for soft potentials.  As we previously mentioned, this proof is valid for
both soft and hard potentials as it relies solely on Lemma \ref{sdl1} and Theorem \ref{KS}.
The key step to apply Theorem \ref{KS} is to find suitable functions that satisfy the
\textit{beginning condition} globally.
The most natural (and simplest) choice for the first members is
\begin{equation*}
l^{\#}_{0}=0\;\;\;\;\mathrm{and}\;\;\;\;u^{\#}_{0}=C\exp\left(-\alpha|x|^{2}-\beta|v|^{2}\right).
\end{equation*}
Now compute the following two members
\begin{equation*}
l^{\#}_{1}(t)=f_{0}\exp\left(-\int^{t}_{0}R^{\#}(u_{0})(\tau)d\tau\right)\;\;\;\;
\mathrm{and}\;\;\;\;u^{\#}_{1}(t)=f_{0}+\int^{t}_{0}Q^{\#}_{+}(u_{0},u_{0})(\tau)d\tau.
\end{equation*}
Clearly, $0\leq l^{\#}_{0}\leq l^{\#}_{1}\leq u^{\#}_{1}$.  In addition, using the previous expression
and Lemma \ref{sdl1} we conclude that, for all $t\geq0$,
\begin{equation*}
u^{\#}_{1}(t)\leq \left(\left\|f_{0}\right\|_{\alpha,\beta}+k_{\alpha,\beta}\left\|u^{\#}_{0}\right\|^{2}_{\alpha,\beta}\right)
\exp\left(-\alpha|x|^{2}-\beta|v|^{2}\right).
\end{equation*}
Noting that $\left\|u^{\#}_{0}\right\|_{\alpha,\beta}=C$, therefore it suffices to choose $C$ as
\begin{equation*}
\left\|f_{0}\right\|_{\alpha,\beta}+k_{\alpha,\beta}C^{2}=C
\end{equation*}
in order to satisfy the beginning condition globally.
This is actually  possible for as long as
\begin{equation*}
\left\|f_{0}\right\|_{\alpha,\beta}\leq\frac{1}{4k_{\alpha,\beta}}\ .
\end{equation*}
In particular this previous calculation proves the following theorem.
\begin{Theorem}\label{sdexistence}
Let $B(|u|,\hat{u}\cdot\sigma)$ satisfy the conditions (A.1)--(A.2) with condition (A.1) relaxed to $-1\leq\lambda<n-1$.
 Then, the Cauchy-Boltzmann problem has a unique global distributional solution if
\begin{equation*}
\left\|f_{0}\right\|_{\alpha,\beta}\leq\frac{1}{4k_{\alpha,\beta}}\ ,
\end{equation*}
where the constant $k_{\alpha,\beta}$ is given in Lemma \ref{sdl1}.  Moreover, such  distributional solution satisfies
\begin{equation}\label{e1sd}
\left\|f^{\#}\right\|_{L^{\infty}(0,T;\mathcal{M}_{\alpha,\beta})}\leq C:=\frac{1-\sqrt{1-4k_{\alpha,\beta}\left\|f_{0}\right\|_{\alpha,\beta}}}{2k_{\alpha,\beta}} \, ,
\end{equation}
for any $0\leq T\leq\infty$.
\end{Theorem}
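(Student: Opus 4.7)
The plan is to deduce Theorem \ref{sdexistence} directly from the Kaniel--Shinbrot machinery packaged in Theorem \ref{KS} by exhibiting an explicit pair of barriers $(l_0^\#, u_0^\#)$ that satisfies the beginning condition (\ref{sde3}) globally in time on $[0,\infty)$. Because Theorem \ref{KS} already supplies monotone convergence, existence, and uniqueness, my entire task reduces to a constructive choice of initial barriers and an algebraic verification powered by Lemma \ref{sdl1}.

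I would take the simplest admissible pair:
\begin{equation*}
l_0^\# \equiv 0, \qquad u_0^\#(t,x,v) = C\, e^{-\alpha|x|^2 - \beta|v|^2},
\end{equation*}
with the constant $C>0$ to be fixed at the end. The next two iterates are then obtained by direct integration of the linear problems (\ref{sde2}), which, since $l_0^\#=0$ kills $Q_+^\#(l_0,l_0)$ and $Q_-^\#(u_1,l_0)$, reduces to
\begin{equation*}
l_1^\#(t) = f_0\,\exp\!\left(-\int_0^t R^\#(u_0)(\tau)\,d\tau\right), \qquad u_1^\#(t) = f_0 + \int_0^t Q_+^\#(u_0,u_0)(\tau)\,d\tau.
\end{equation*}
The inequalities $0 = l_0^\# \leq l_1^\# \leq f_0 \leq u_1^\#$ are immediate from positivity of $R^\#(u_0)$ and of $Q_+^\#(u_0,u_0)$, so the only non-trivial point is the top inequality $u_1^\# \leq u_0^\#$.

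To establish this, I would apply Lemma \ref{sdl1} on the interval $[0,t]$ to the product $Q_+^\#(u_0,u_0)$, which yields
\begin{equation*}
u_1^\#(t,x,v) \leq \bigl(\|f_0\|_{\alpha,\beta} + k_{\alpha,\beta}\|u_0^\#\|_{L^\infty(0,T;\mathcal{M}_{\alpha,\beta})}^2\bigr)\, e^{-\alpha|x|^2-\beta|v|^2}.
\end{equation*}
Since $\|u_0^\#\|_{L^\infty(0,T;\mathcal{M}_{\alpha,\beta})} = C$, the bound $u_1^\# \leq u_0^\#$ becomes the scalar inequality $\|f_0\|_{\alpha,\beta} + k_{\alpha,\beta}C^2 \leq C$. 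This is solvable for $C$ precisely when the quadratic $k_{\alpha,\beta}C^2 - C + \|f_0\|_{\alpha,\beta} \leq 0$ admits a real root, i.e. when the discriminant $1 - 4k_{\alpha,\beta}\|f_0\|_{\alpha,\beta}$ is non-negative, recovering the smallness condition. Choosing $C$ to be the smaller root yields the explicit value in (\ref{e1sd}), and crucially this value is independent of $T$, so the beginning condition holds on every $[0,T]$ with $T \leq \infty$.

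The main (mild) subtlety I foresee is not the algebra but bookkeeping: one must check that the barriers produced here lie in $L^\infty(0,T;\mathcal{M}_{\alpha,\beta})$ uniformly in $T$, so that Lemma \ref{sdl1} may be applied with the same $k_{\alpha,\beta}$ on any time horizon and thus the monotone limit furnished by Theorem \ref{KS} yields a genuinely global distributional solution. Once the beginning condition is verified, uniqueness and the bound (\ref{e1sd}) are direct consequences of part \textit{(ii)} of Theorem \ref{KS}, since $f^\# \leq u_0^\# = C\,e^{-\alpha|x|^2-\beta|v|^2}$ pointwise a.e.
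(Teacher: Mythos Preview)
Your proposal is correct and matches the paper's argument essentially step for step: the same choice of barriers $l_0^\#=0$, $u_0^\#=C\,M_{\alpha,\beta}$, the same explicit formulas for $l_1^\#$ and $u_1^\#$, the same application of Lemma~\ref{sdl1}, and the same quadratic condition on $C$ resolved by the smallness hypothesis. The only cosmetic difference is that the paper writes the quadratic as an equality while you write it as an inequality before selecting the smaller root; both yield the constant in~(\ref{e1sd}).
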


As a consequence of Theorem \ref{sdexistence}, one concludes that the distributional solution $f$ is
controlled by a traveling Maxwellian, and that
\begin{equation*}
\lim_{t\rightarrow\infty}f(t,x,\xi)\rightarrow0\ \mbox{a.e. in}\ \mathbb{R}^{2n}.
\end{equation*}
\section{Distributional solutions near local Maxwellian}
The aim of this section is to use Theorem \ref{KS} to construct solutions for the Cauchy Boltzmann problem
in the soft potential case when the initial data is  locally close to  equilibrium; that is,
the initial data is  near to the local Maxwellian distribution given by (\ref{lm}).  In contrast to the construction for small
data made in the previous section, the negative part of the collision operator will be essential for this
derivation.  The main idea of this construction is taken from \cite{toscani1}. In addition, we refer to
\cite{mischler-perthame} that worked the Maxwellian case with infinite energy and to \cite{goudon} for an
elegant approach with an abstract fixed point argument.\\

Let us introduce some convenient notation and terminology that will help to maintain the proof short and clear.
First, we define the \textit{distance} between two Maxwellian distributions
$M_{i}=C_{i}M_{\alpha_i,\beta_i}$ for $i=1,2$ as
\begin{equation*}
d(M_1,M_2):=|C_{2}-C_{1}|+|\alpha_{2}-\alpha_1|+|\beta_{2}-\beta_{1}|.
\end{equation*}
Second, we say that $f$ is $\epsilon$--\textit{close} to the Maxwellian
distribution $M=C\;M_{\alpha,\beta}$ if there exist Maxwellian distributions $M_{i}$ ($i=1,2$)
such that $d(M_{i},M)<\epsilon\;$ for some small $\epsilon>0$, and
\begin{equation*}
M_{1}\leq f\leq M_{2}.
\end{equation*}
It is also  convenient to define the function
\begin{equation*}
\phi_{\alpha,\beta}(t,x,v):=\left\|b\right\|_{L^{1}(S^{n-1})}\int_{\mathbb{R}^{n}}\exp\left(-\alpha\left|x+u\right|^{2}-\beta
\left|v-u/t\right|^{2}\right)|u|^{-\lambda}du.
\end{equation*}
A simple analysis of this expression shows that  the function
$\phi_{\alpha,\beta}$ is bounded for $-n<\lambda\leq 0$ and that
\begin{equation}\label{pnme1}
\left\|\phi_{\alpha_{2},\beta_{2}}-\phi_{\alpha_{1},\beta_{1}}\right\|_{L^{\infty}}\leq C(\min{\alpha_{i}},\min{\beta_{i}})\;d(M_{1},M_{2}),
\end{equation}
for $0<\alpha_{i}$ and $0<\beta_{i}$.  Similar control holds for $\beta_1=\beta_2=0$ with
constant depending only on $\min{\alpha_i}$.

\begin{Theorem}\label{nmexistence}
Let $B(|u|,\hat{u}\cdot\sigma)$ satisfy the conditions (A.1) and (A.2).  In addition, assume
that $f_{0}(x,v)$ is $\epsilon$--\textit{close} to the local Maxwellian distribution
$C\;M_{\alpha,\beta}(x-v,v)$ ($\alpha>0,\;\beta>0,\;C>0$).  Then, for sufficiently small
$\epsilon$ depending on the initial parameters $C,\alpha,\beta$, the model parameters $\lambda,\left\|b\right\|_{L^{1}(S^{n-1})}$ and
dimension $n$, the Cauchy Boltzmann problem \eqref{ie1} has a unique solution satisfying
\begin{equation}\label{nme0.49}
C_1(t)\;M_{\alpha_1,\beta_1}\left(x-(t+1)v,v\right)\leq f(t,x,v)\leq C_2(t)\;M_{\alpha_2,\beta_2}
\left(x-(t+1)v,v\right),
\end{equation}
for all $t\geq0$, for some continuous functions $0<C_1(t)\leq C\leq C_2(t)<\infty$, and parameters
$0<\alpha_2\leq\alpha\leq\alpha_1$ and $0<\beta_{2}\leq\beta\leq\beta_{1}$.
Moreover, the case $\beta=0$ (infinite mass) is permitted as long as $\beta_1=\beta_2=0$.
\end{Theorem}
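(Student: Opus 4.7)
The plan is to reduce the theorem to Theorem \ref{KS}, the main task being to construct barriers $(l_0,u_0)$ enforcing the beginning condition (\ref{sde3}) globally on $[0,\infty)$. Guided by the target estimate (\ref{nme0.49}), I would take
\[
l_0(t,x,v):=C_1(t)\,M_{\alpha_1,\beta_1}\bigl(x-(t+1)v,v\bigr),\qquad u_0(t,x,v):=C_2(t)\,M_{\alpha_2,\beta_2}\bigl(x-(t+1)v,v\bigr),
\]
with $0<\alpha_2\leq\alpha\leq\alpha_1$, $0<\beta_2\leq\beta\leq\beta_1$, $C_1$ to be chosen decreasing, $C_2$ increasing, and initial values selected close enough to $(C,\alpha,\beta)$ that the $\epsilon$-closeness of $f_0$ produces the sandwich $l_0(0)\leq f_0\leq u_0(0)$. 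Under the trajectory operator these become the time-modulated stationary profiles $l_0^{\#}=C_1(t)M_{\alpha_1,\beta_1}(x-v,v)$, and analogously for $u_0^{\#}$.

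The structural observation that makes everything close is that, for each fixed $(t,x)$, both $l_0$ and $u_0$ are \emph{shifted Maxwellians in $v$}: completing the square in the exponent yields the form $A(t,x)\exp(-D(t)|v-\mu(t,x)|^{2})$. By the classical collision invariants this forces $Q(l_0,l_0)\equiv Q(u_0,u_0)\equiv 0$, so the gain terms in (\ref{sde2}) collapse and $l_1^{\#},u_1^{\#}$ satisfy
\[
\tfrac{d}{dt}l_1^{\#}+l_1^{\#}R^{\#}(u_0)=l_0^{\#}R^{\#}(l_0),\qquad \tfrac{d}{dt}u_1^{\#}+u_1^{\#}R^{\#}(l_0)=u_0^{\#}R^{\#}(u_0).
\]
A direct comparison on $\psi:=l_1^{\#}-l_0^{\#}$ and $\varphi:=u_0^{\#}-u_1^{\#}$ then reduces the outer inequalities of (\ref{sde3}) to the pointwise conditions
\[
-\tfrac{d}{dt}\log C_1(t)\;\geq\;R^{\#}(u_0-l_0)(t,x,v),\qquad \tfrac{d}{dt}\log C_2(t)\;\geq\;R^{\#}(u_0-l_0)(t,x,v)\quad\forall(x,v),
\]
while $l_1^{\#}\leq u_1^{\#}$ is automatic: the equation for $(u_1-l_1)^{\#}$ has nonnegative forcing and vanishing initial datum.

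The next step is the change of variable $w=x+tv-(t+1)v_{*}$ inside $R^{\#}$, which recasts
\[
R^{\#}(u_0-l_0)(t,x,v)=(t+1)^{\lambda-n}\bigl[C_2(t)\phi_{\alpha_2,\beta_2}(t+1,x-v,v)-C_1(t)\phi_{\alpha_1,\beta_1}(t+1,x-v,v)\bigr].
\]
Using the uniform boundedness of $\phi_{\alpha,\beta}$ together with the Lipschitz estimate (\ref{pnme1}), the supremum $K(t):=\sup_{(x,v)}R^{\#}(u_0-l_0)(t,x,v)$ satisfies $K(t)\leq c\,\epsilon\,(t+1)^{\lambda-n}$ whenever the barrier parameters remain within $O(\epsilon)$ of $(C,\alpha,\beta)$. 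Since (A.1) forces $\lambda-n<-1$, this bound is integrable on $[0,\infty)$, and the natural candidates
\[
C_2(t)=C_2(0)\,e^{\int_{0}^{t}K(s)\,ds},\qquad C_1(t)=C_1(0)\,e^{-\int_{0}^{t}K(s)\,ds}
\]
drift from their initial values by a factor of only $1+O(\epsilon)$, remaining positive and bounded for all time.

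The main obstacle I anticipate is the self-consistency of $K$: it depends on $(C_1,C_2)$ which are in turn defined through $K$. I would handle it by a standard contraction/Gronwall argument in the small parameter $\epsilon$ on the set of pairs $(C_1,C_2)$ bounded by $(1\pm c\epsilon)C$, where for $\epsilon$ sufficiently small the map defined by the displayed formulas is a contraction, and its fixed point furnishes the required barriers. Once the beginning condition is satisfied on $[0,\infty)$, Theorem \ref{KS} provides the unique distributional solution obeying (\ref{nme0.49}). The infinite-mass case $\beta=0$ follows along the same lines, as the boundedness of $\phi_{\alpha,0}$ and the estimate (\ref{pnme1}) depend only on the spatial decay provided by $\alpha>0$.
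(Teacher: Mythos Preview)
Your proposal is correct and lands on the same barrier inequalities as the paper, but via a cleaner observation and a different closure. Both arguments take $l_0^{\#}=C_1(t)M_{\alpha_1,\beta_1}$, $u_0^{\#}=C_2(t)M_{\alpha_2,\beta_2}$ and reduce the beginning condition to pointwise differential inequalities on $C_1,C_2$. The paper computes $Q_+^{\#}(l_0,l_0)$ and $Q_-^{\#}(l_0,u_0)$ directly and obtains $C_1'\le t^{\lambda-n}(C_1^2\phi_1-C_1C_2\phi_2)$ together with its companion for $C_2$; you reach the same inequalities in one stroke by noting that $l_0,u_0$ are local Maxwellians in $v$ so that $Q_+(l_0,l_0)=l_0R(l_0)$, which is a genuine conceptual shortcut explaining \emph{why} the barrier conditions take this form. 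Where the arguments diverge is in solving them: the paper symmetrizes via $\phi_1\pm\phi_2$, passes to a sufficient ODE \emph{system}, observes the conserved product $C_1(t)C_2(t)\equiv C_1C_2$, and reads off global boundedness from an explicit Riccati-type formula; you instead bound the right-hand side by $K(t)=\sup_{x,v}R^{\#}(u_0-l_0)$ and close by a contraction in $\epsilon$. Your route avoids the explicit ODE analysis, while the paper's explicit solution buys sharper constants. One simplification on your side: the fixed-point is not really needed, since $\alpha_i,\beta_i$ are fixed from the outset; you can postulate $C_i(t)\in[C/2,2C]$, use the resulting a priori bound $K(t)\le c\,\epsilon\,(t+1)^{\lambda-n}$ to \emph{define} $C_i$ directly, and then check that the total drift $\int_0^{\infty}K=O(\epsilon)$ keeps $C_i$ in the postulated range. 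Finally, your ``automatic'' claim $l_1^{\#}\le u_1^{\#}$ implicitly uses $l_0\le u_0$ (hence $R(l_0)\le R(u_0)$ and $Q_+(l_0,l_0)\le Q_+(u_0,u_0)$); this is immediate from $C_1\le C_2$, $\alpha_1\ge\alpha_2$, $\beta_1\ge\beta_2$, but should be said.
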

\begin{proof}
The idea of the proof is to solve the Cauchy Boltzmann problem for $t\geq 1$, with initial data $f_{0}$ given at $t=1$.
The translation (Galilean) invariance of the Boltzmann equation will imply the result for $t\geq0$.
To this end, we build a lower and upper barriers of the Kaniel and Shinbrot sequences
and prove that they satisfy the \textit{beginning condition}.
Finally, we conclude the result using Theorem \ref{KS}.\\

First note that since $f_{0}(x,v)$ is $\epsilon$--\textit{close} to the local Maxwellian $C\;M_{\alpha,\beta}(x-v,v)$, then, there
 exist two local Maxwellian distributions such that $d(M_{i},M)<\epsilon\;$ ($i=1,2$) and
\begin{equation}\label{efo}
C_{1}\;M_{\alpha_1,\beta_1}(x,v)\leq f_{0}(x+v,v)\leq C_{2}\;M_{\alpha_2,\beta_2}(x,v).
\end{equation}
Since $\epsilon\;$ is expected to be small, we may assume that
\begin{equation}\label{eqp}
\tfrac{1}{2}\;C\leq C_{1}\leq C_{2}\leq 2\;C\ , \ \ \tfrac{1}{2}\;\alpha\leq\alpha_2\leq\alpha_{1}\leq 2\;\alpha\ \ \ \mbox{and}\ \ \ \tfrac{1}{2}\;\beta\leq\beta_2\leq\beta_{1}\leq 2\;\beta.
\end{equation}
Next, define the barriers $l^{\#}_0$ and $u^{\#}_0$ as follows:
\begin{equation}\label{ba}
0\leq l^{\#}_{0}(t)=C_1(t)M_{\alpha_1,\beta_1}\ \ \mbox{and}\ \ u^{\#}_{0}(t)=C_2(t)M_{\alpha_2,\beta_2}.
\end{equation}
The functions $C_1(t)$ and $C_2(t)$ will be chosen later satisfying
\begin{equation}\label{conC}
0\leq C_1(t)\leq C_2(t)\ \ \mbox{with}\ \ C_1(1)=C_1\leq C_2=C_2(1),
\end{equation}
and that the following differential inequalities
\begin{align}
&\frac{dl_{0}^{\#}}{dt}(t)+Q^{\#}_{-}(l_{0},u_{0})(t)\leq Q^{\#}_{+}(l_{0},l_{0})(t)\ \ \ \mbox{and} \label{nme0.5}\\
&\frac{du_{0}^{\#}}{dt}(t)+Q^{\#}_{-}(u_{0},l_{0})(t)\geq Q^{\#}_{+}(u_{0},u_{0})(t),\label{nme1}
\end{align}
are satisfied for all $t\geq1$.
Assume for the moment that such functions $C_{1}(t)$ and $C_{2}(t)$ exist,
then using the definition of $l^{\#}_{1}(t)$ and $u^{\#}_{1}(t)$ in the Kaniel--Shinbrot iteration (\ref{sde2})
\begin{align}
&\frac{dl_{1}^{\#}}{dt}(t)+Q^{\#}_{-}(l_{1},u_{0})(t)=Q^{\#}_{+}(l_{0},l_{0})(t)\label{b1}\\
&\frac{du_{1}^{\#}}{dt}(t)+Q^{\#}_{-}(u_{1},l_{0})(t)=Q^{\#}_{+}(u_{0},u_{0})(t)\label{b2},
\end{align}
with initial condition
\begin{multline}\label{ic1}
0\leq C_1\;M_{\alpha_1,\beta_1}=l^{\#}_{0}(1,x,v)\leq l^{\#}_{1}(1,x,v)=f_{0}(x+v,v)=\\=u^{\#}_{1}(1,x,v)\leq u^{\#}_{0}(1,x,v)=C_2\;M_{\alpha_2,\beta_2},
\end{multline}
one obtains that $l^{\#}_{1}(t)\leq u^{\#}_{1}(t)$.
Since $f_0(x+v,v)\geq C_{1}M_{\alpha_1,\beta_1}$, we can use a classical comparison principle for ODE's between
(\ref{nme0.5}) and (\ref{b1}) to conclude that $l^{\#}_{0}(t)\leq l^{\#}_{1}(t)$.
Similarly,  $f_0(x+v,v)\leq C_{2}M_{\alpha_1,\beta_1}$, thus, comparing (\ref{nme1}) and (\ref{b2}) gives $u^{\#}_1(t)\leq u^{\#}_0(t)$.  Therefore, the \textit{beginning condition} holds for $t\geq1$.  An application of Theorem~\ref{KS} in the interval $[1,\infty)$ and with initial condition $f_{0}(x+v,v)$ produces the desired solution $g(t,x,v)$ satisfying
\begin{equation*}
C_{1}(t)M_{\alpha_1,\beta_1}(x,v)\leq g^{\#}(t,x,v)=g(t,x+tv,v)\leq C_{2}(t)M_{\alpha_1,\beta_1}(x,v)\ \mbox{for}\ t\geq1,
\end{equation*}
with $g^{\#}(1,x,v)=f_{0}(x+v,v)=g(1,x+v,v)$.  This is precisely, up to a shift in time, the control (\ref{nme0.49}) .
The proof concludes by just noticing that $f(t,x,v):=g(t+1,x,v)$ solves the Boltzmann equation in $(0,\infty)$ with initial condition $f_0(x,v)$.\\

It remains to provide two functions $C_{1}(t)$ and $C_2(t)$ fulfilling conditions (\ref{conC}).
First, note that a simple evaluation shows that right hand side in (\ref{nme0.5}) reduces to
\begin{align*}
&Q^{\#}_{+}(l_{0},l_{0})(t)\\
&\qquad =C^{2}_1(t)\;M_{\alpha_1,\beta_1}\left\|b\right\|_{L^{1}(S^{n-1})}\;
\int_{\mathbb{R}^{n}}\exp\left(-\alpha_1\left|x+tu\right|^{2}-\beta_1\left|v-u\right|^{2}\right)
|u|^{-\lambda}du\\
&\qquad =\frac{C^{2}_1(t)}{t^{n-\lambda}}M_{\alpha_1,\beta_1}\;\phi_{1},
\end{align*}
where $\phi_1:=\phi_{\alpha_1,\beta_1}$.  Similarly, the other term in (\ref{nme0.5}) reads
\begin{align*}
&Q^{\#}_{+}(l_{0},u_{0})(t)\\
&\qquad =C_1(t)\;C_2(t)\;M_{\alpha_1,\beta_1}\left\|b\right\|_{L^{1}(S^{n-1})}\;
\int_{\mathbb{R}^{n}}\exp\left(-\alpha_2\left|x+tu\right|^{2}-\beta_2\left|v-u\right|^{2}
\right)|u|^{-\lambda}du\\
&\qquad =\frac{C_1(t)\;C_2(t)}{t^{n-\lambda}}M_{\alpha_1,\beta_1}\;\phi_{2},
\end{align*}
where $\phi_2:=\phi_{\alpha_2,\beta_2}$.  A similar calculation holds for the terms in (\ref{nme1}).

Thus, the differential inequalities (\ref{nme0.5}) and (\ref{nme1}), after the cancellation of the Maxwellian distributions $M_{\alpha_i,\beta_i}$ ($i=1,2$), are recasted as
\begin{align}
&C_1'(t)\leq \frac{C^{2}_1(t)}
{t^{n-\lambda}}\;\phi_{1}-\frac{C_1(t)\;C_2(t)}{t^{n-\lambda}}\;\phi_{2}\label{nme1.8}\\
&C_2'(t)\geq \frac{C^{2}_2(t)}
{t^{n-\lambda}}\;\phi_{2}-\frac{C_1(t)\;C_2(t)}{t^{n-\lambda}}\;\phi_{1}.\label{nme2}
\end{align}
Next, in order to get uniform bounds in space, we want to get rid of the $(x,v)$-dependence in inequalities (\ref{nme1.8}) and (\ref{nme2}).  Indeed, observe that the right hand side of the differential inequality (\ref{nme1.8}) is uniformly controlled in $(x,v)$ from below.
\begin{align}
&\frac{\phi_1\;C_1^{2}(t)-\phi_2\;C_1(t)\;C_2(t)}{t^{n-\lambda}}=\nonumber\\
&\qquad \qquad = \frac{C_1^{2}(t)-C_1(t)\;C_2(t)}{2\;t^{n-\lambda}}\;(\phi_1+\phi_2)+
\frac{C_1^{2}(t)+C_1(t)\;C_2(t)}{2\;t^{n-\lambda}}\;(\phi_1-\phi_2)\nonumber\\
&\qquad \qquad \geq\frac{C_1^{2}(t)-C_1(t)\;C_2(t)}{2\;t^{n-\lambda}}\;\left\|\phi_1+
\phi_2\right\|_{L^{\infty}}-\frac{C_1^{2}(t)+C_1(t)\;C_2(t)}{2\;t^{n-\lambda}}\;
\left\|\phi_1-\phi_2\right\|_{L^{\infty}}.\label{cb}
\end{align}
Similar calculations show that the right hand side of the differential inequality (\ref{nme2}) is controlled by above
\begin{align}
&\frac{\phi_2\;C^{2}_2(t)-\phi_1\;C_1(t)\;C_2(t)}{t^{n-\lambda}}\leq\nonumber\\
&\qquad \qquad \leq \frac{C^{2}_2(t)-C_1(t)\;C_2(t)}{2\;t^{n-\lambda}}\;\left\|\phi_1+
\phi_2\right\|_{L^{\infty}}+\frac{C^{2}_2(t)+C_1(t)\;C_2(t)}{2\;t^{n-\lambda}}\;
\left\|\phi_1-\phi_2\right\|_{L^{\infty}}.\label{ca}
\end{align}

Therefore, we now are in conditions to find  barrier functions  $u^{\#}_0$ and $l^{\#}_{0}$ that satisfy (\ref{nme0.5}) and (\ref{nme1})
with the choice (\ref{ba}).
In fact, a \textit{sufficient condition} for  functions $C_{1}(t)$ and $C_{2}(t)$ to satisfy inequalities (\ref{nme1.8}) and (\ref{nme2}), respectively,  is obtain by choosing them as solutions of the following non linear differential system
\begin{align}
&C_1'(t)=\frac{C_1^{2}(t)-C_1(t)\;C_2(t)}{2\;t^{n-\lambda}}\;\left\|\phi_1+\phi_2\right\|_{L^{\infty}}-
\frac{C_1^{2}(t)+C_1(t)\;C_2(t)}{2\;t^{n-\lambda}}\;\left\|\phi_1-\phi_2\right\|_{L^{\infty}} \label{nme2.9}\\
&C_2'(t)=\frac{C^{2}_2(t)-C_1(t)\;C_2(t)}{2\;t^{n-\lambda}}\;\left\|\phi_1+\phi_2\right\|_{L^{\infty}}
+\frac{C^{2}_2(t)+C_1(t)\;C_2(t)}{2\;t^{n-\lambda}}\;\left\|\phi_1-\phi_2\right\|_{L^{\infty}},\label{nme3}
\end{align}
with initial condition $C_{1}(1)=C_1$ and $C_{2}(1)=C_{2}$.

Indeed, if $C_{1}(t)$ and $C_{2}(t)$ satisfy the differential system (\ref{nme2.9}) and (\ref{nme3}), the uniform control (\ref{cb}) and (\ref{ca}) readily gives that such functions satisfy the differential inequalities (\ref{nme1.8}) and (\ref{nme2}) as well.\\
In \cite{mischler-perthame} the reader can find a proof, for $\lambda=0$ and
$\left\|\phi_1-\phi_2\right\|_{L^{\infty}}=0$ (i.e. $\alpha_1=\alpha_2$ and $\beta_1=\beta_2$),
that such $C_1(t)$ and $C_2(t)$ exist for $t\geq1$ and are uniformly bounded on time provided
that the parameter $\epsilon$ is sufficiently small.  Here, we explicitly present the solution of such differential system by noticing that equations (\ref{nme2.9}), (\ref{nme3}) imply the simple differential relation
\begin{equation*}
\frac{C_1'(t)}{C_1(t)}=-\frac{C_2'(t)}{C_2(t)}\;,
\end{equation*}
which yields the algebraic relation for any $t\geq t_0\geq 1$
\begin{equation*}
\frac{C_1(t)}{C_1(t_{0})}=\frac{C_2(t_{0})}{C_2(t)}\;.
\end{equation*}
Hence, $C_2(t)$ obeys the equation
\begin{align*}
C_2'(t)&+\left(\left\|\phi_1+\phi_2\right\|_{L^{\infty}}-\left\|\phi_1-\phi_2\right\|_{L^{\infty}}\right)
\frac{C_1(1)\;C_2(1)}{2\;t^{n-\lambda}}=\\
&\qquad \qquad= \left(\left\|\phi_1+\phi_2\right\|_{L^{\infty}}+\left\|\phi_1-\phi_2\right\|_{L^{\infty}}\right)
\frac{C_2^{2}(t)}{2\;t^{n-\lambda}}\;,
\end{align*}
which has explicit solutions.

Indeed, let
\begin{equation*}
k^{2}=\frac{\left\|\phi_1+\phi_2\right\|_{L^{\infty}}-\left\|\phi_1-\phi_2\right\|_{L^{\infty}}}
{\left\|\phi_1+\phi_2\right\|_{L^{\infty}}+\left\|\phi_1-\phi_2\right\|_{L^{\infty}}}\;C_1(1)\;C_2(1)\;,
\end{equation*}
then
\begin{equation*}
\frac{C_2(1)+k}{C_2(1)-k}\ \frac{C_2(t)-k}{C_2(t)+k}=\exp\left(k\;\frac{\left\|\phi_1+\phi_2\right\|_{L^{\infty}}+\left\|\phi_1-\phi_2\right\|_{L^{\infty}}}
{n-\lambda-1}\left(1-\frac{1}{t^{n-\lambda-1}}\right)\right)\;.
\end{equation*}
Therefore, $C_2(t)$ will remain uniformly bounded for $t\geq 1$ as long as
\begin{equation}\label{nme4}
\exp\left(k\;\frac{\left\|\phi_1+\phi_2\right\|_{L^{\infty}}+\left\|\phi_1-\phi_2\right\|_{L^{\infty}}}{n-\lambda-1}\right)<
\frac{C_2(1)+k}{C_2(1)-k}\;.
\end{equation}
Using (\ref{pnme1}) and (\ref{eqp}), an elementary calculation shows that
\begin{align*}
\left|C_2(1)-k\right|\leq K_1(C,\alpha,\beta)\;d(M_{1},M_{2})&\leq 2\;K_1(C,\alpha,\beta)\;\epsilon\;,\\  \exp\left(k\;\frac{\left\|\phi_1+\phi_2\right\|_{L^{\infty}}+\left\|\phi_1-\phi_2\right\|_{L^{\infty}}}{n-\lambda-1}\right)&\leq K_{2}(C,\alpha,\beta)\; ,\\
\mbox{and}\ \ \ \ C_2(1)+k&\geq K_{3}(C,\alpha,\beta)\;.
\end{align*}
Hence, inequality (\ref{nme4}) is easily obtained for sufficiently small $\epsilon>0$.  Thus, the existence of continuous $C_{1}(t)$ and $C_{2}(t)$ are assured for $t\geq1$.  Moreover, as simple analysis of the previous expressions shows that $C_{1}(t)$ is non increasing and $C_{2}(t)$ is non decreasing, hence $C_{1}(t)\leq C_{2}(t)$, satisfying condition (\ref{conC}).\\

Finally, note that the norm $\left\|\phi_{\alpha,\beta}\right\|_{L^{\infty}}$ is, in general, controlled by a constant independent of $\beta$, hence by using (\ref{pnme1}) the case $\beta_1=\beta_2=0$ of infinite mass and energy is included in the result.
\end{proof}
\noindent\textbf{Remark:} In fact \textit{item (ii)} of the original version of Theorem \ref{KS} was proved by Kaniel and Shinbrot assuming that $u^{\#}_0(t)$ was summable, however, this is not a major restriction for the result to be true as the authors in \cite{mischler-perthame} proved.


\section{Classical solutions}
We prove in this section the existence of classical solutions for the cases presented in the two previous ones.  For this purpose we assume basic regularity in the initial data and then prove that such regularity is propagated through time.  This approach is quite different to the one available by perturbation theory of a \textit{global} Maxwellian  (see \cite{guo1,guo2}), where high regularity in the initial data is crucial for their techniques based in Sobolev embedding theorems.  Essentially, we follow the scheme presented in \cite{boudin-desvillettes} adding some new features as needed.\\

\noindent\textbf{Definition.}  A \textit{classical solution} in $[0,T]$ of problem (\ref{ie1}) is a function such that
\begin{itemize}
\item [\it (i)] $f(t)\in W^{1,1}(0,T;L^{\infty}(\mathbb{R}^{2n}))$
\item [\it (ii)] $\nabla_x f\in L^{1}(0,T;L^{p}(\mathbb{R}^{2n}))$ for some $1\leq p$,
\end{itemize}
which solves the Cauchy Boltzmann problem (\ref{ie1}) a.e. in $[0,T)\times\mathbb{R}^{2n}$.\\

Before presenting the proof, we need the following estimate.
\begin{Theorem}\label{HLS}
Let the collision kernel satisfy assumptions $(A.1)$ and $(A.2)$ in section 1.1. Then,
\begin{itemize}
\item [\it(i)]for $1<p,q,r<\infty$ with $1/p+1/q+\lambda/n=1+1/r$,
\begin{equation}\label{cse1}
\left\|Q_{\pm}(f,g)\right\|_{L^{r}_{v}(\mathbb{R}^{n})}\leq C_{\pm}\left\||u|^{-\lambda}\right\|_{L^{n/\lambda}_{w}(\mathbb{R}^{n})}\;\left\|f\right\|_{L^{p}_{v}(\mathbb{R}^{n})}
\left\|g\right\|_{L^{q}_{v}(\mathbb{R}^{n})}.
\end{equation}
\item [\it(ii)]Moreover, assume that
\begin{equation*}
B(|u|,\hat{u}\cdot\sigma)=\Phi(u)\;b(\hat{u}\cdot\sigma)
\end{equation*}
with $\Phi\in L^{s}(\mathbb{R}^{n})$ radially symmetric and non increasing.  Then, for $1\leq p,q,r\leq\infty$ with $1/p+1/q+1/s=1+1/r$ one can estimate
\begin{equation}\label{cse1.1}
\left\|Q_{\pm}(f,g)\right\|_{L^{r}_{v}(\mathbb{R}^{n})}\leq C_{\pm}\left\|\Phi\right\|_{L^{s}(\mathbb{R}^{n})}\left\|f\right\|_{L^{p}_{v}(\mathbb{R}^{n})}\left\|g\right\|_{L^{q}_{v}(\mathbb{R}^{n})},
\end{equation}
\end{itemize}
where $C_{\pm}=C(n,p,q,r,\left\|b\right\|_{L^{1}(S^{n-1})})$ for $n\geq3$.
\end{Theorem}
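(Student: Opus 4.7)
The plan is to treat the loss and gain parts of the collision operator separately. The loss part $Q_{-}$ is, thanks to the Grad cutoff $(A.2)$, an explicit convolution, so the estimate follows directly from the Hardy--Littlewood--Sobolev or Young inequality followed by H\"older. The gain part $Q_{+}$ has no such representation, and the natural route is duality in $L^{r}$ combined with the pre-post collisional change of variables, which reduces the resulting trilinear form to one of the same structure as for $Q_{-}$.

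\textbf{Loss part.} Using (A.2) we have the convolution representation
$$Q_{-}(f,g)(v)=\|b\|_{L^{1}(S^{n-1})}\,f(v)\,(g*K)(v),$$
with $K(u)=|u|^{-\lambda}$ in part (i) and $K=\Phi$ in part (ii). In (i), since $|u|^{-\lambda}\in L^{n/\lambda}_{w}(\mathbb{R}^{n})$, the weak Young / Hardy--Littlewood--Sobolev inequality yields
$$\|g*|\cdot|^{-\lambda}\|_{L^{m}_{v}}\leq C\,\|g\|_{L^{q}_{v}}\,\||u|^{-\lambda}\|_{L^{n/\lambda}_{w}}$$
for $1<q,m<\infty$ with $1/q+\lambda/n=1+1/m$. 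A final H\"older step with $1/p+1/m=1/r$ composes to the announced scaling $1/p+1/q+\lambda/n=1+1/r$. In (ii) I would replace the HLS step by ordinary Young's inequality $\|g*\Phi\|_{L^{m}}\leq\|g\|_{L^{q}}\|\Phi\|_{L^{s}}$ with $1/q+1/s=1+1/m$; this extends to endpoint exponents and gives the range $1\leq p,q,r\leq\infty$.

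\textbf{Gain part.} By duality in $L^{r}$, the plan is to estimate the pairing against a test $\phi\in L^{r'}_{v}$ with $\|\phi\|_{L^{r'}}\leq 1$. Writing out the weak form and applying the pre-post collisional change of variables $(v,v_{*})\leftrightarrow({'v},{'v_{*}})$ for fixed $\sigma\in S^{n-1}$, which is an involution of unit Jacobian and preserves $|u|$, one obtains
$$\int_{\mathbb{R}^{n}}Q_{+}(f,g)\,\phi\,dv=\int\int\int f(v)\,g(v_{*})\,\phi({'v})\,B(|u|,\hat{u}\cdot\sigma)\,d\sigma\,dv_{*}\,dv.$$
For each fixed $\sigma$, the map $v\mapsto {'v}$ at $v_{*}$ frozen is the reflection of the $\sigma$-component of $v-v_{*}$ and is measure-preserving, so a further change of variable puts the inner trilinear form into the same convolution structure used for $Q_{-}$. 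Applying H\"older together with HLS (in (i)) or Young (in (ii)) pointwise in $\sigma$, then integrating $b(\hat{u}\cdot\sigma)$ over $S^{n-1}$, bounds the integral by
$$C\,\|b\|_{L^{1}(S^{n-1})}\,\|f\|_{L^{p}_{v}}\,\|g\|_{L^{q}_{v}}\,\|\phi\|_{L^{r'}_{v}}\cdot\big(\||u|^{-\lambda}\|_{L^{n/\lambda}_{w}}\text{ or }\|\Phi\|_{L^{s}}\big).$$
Taking the supremum over $\phi$ delivers the bounds \eqref{cse1} and \eqref{cse1.1}.

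\textbf{Main obstacle.} The difficult step is the gain part, specifically handling the $\sigma$-dependence in the reflection defining ${'v}$: a naive change of variable does not immediately expose a translation-invariant convolution, so one cannot apply HLS/Young directly under the $\sigma$-integral. The cleanest way to bypass this -- and what is carried out in detail in \cite{alonso-carneiro-gamba} -- is a Brascamp--Lieb--Luttinger rearrangement argument on the weak form, which decouples the $\sigma$-dependence and simultaneously identifies the extremizers as radially symmetric non-increasing profiles (explaining the monotonicity hypothesis on $\Phi$ in part (ii)). Once rearrangement is available, the scaling relations in (i) and (ii) are forced by homogeneity, and the explicit dependence of the constant on $n,p,q,r,\|b\|_{L^{1}(S^{n-1})}$ for $n\geq 3$ follows from tracking the constants in HLS/Young.
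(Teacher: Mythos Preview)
The paper does not give a self-contained proof of this theorem: it simply cites sections~4 and~5 of \cite{alonso-carneiro-gamba} and records the explicit form of the constants $C_{\pm}$. Your proposal is therefore more detailed than what the paper itself provides, and in its overall strategy---loss part by H\"older plus Young/HLS, gain part by duality and ultimately radial rearrangement of Brascamp--Lieb--Luttinger type---you correctly anticipate the content of that reference. In particular, your ``Main obstacle'' paragraph is on target: the $\sigma$-dependence of the map $v\mapsto{'v}$ is precisely what prevents a direct HLS/Young step on the gain term, rearrangement is the device that resolves it, and this is also what forces the radial monotonicity hypothesis on $\Phi$ in part~(ii).

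One concrete error in your intermediate paragraph on the gain part is worth flagging. With the collision rules used in this paper, ${'v}=v-(u\cdot\sigma)\sigma$ and ${'v_{*}}=v_{*}+(u\cdot\sigma)\sigma$, the map $v\mapsto{'v}$ with $v_{*}$ and $\sigma$ frozen is \emph{not} a reflection and is \emph{not} measure-preserving: its Jacobian is $I-\sigma\otimes\sigma$, a rank-$(n{-}1)$ projection with determinant zero. So the sentence ``a further change of variable puts the inner trilinear form into the same convolution structure used for $Q_{-}$'' is not merely heuristic but incorrect as written. This does no damage to your conclusion, since two lines later you acknowledge that the naive reduction fails and identify the rearrangement argument of \cite{alonso-carneiro-gamba} as the actual mechanism; still, the paragraph would be cleaner if you dropped the measure-preservation claim and went straight to the obstacle.
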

\noindent The proof of these two estimates can be found in sections 4 and 5 of \cite{alonso-carneiro-gamba}.  The constants are explicit
\begin{multline*}
C_{+}=C(n,p,q,r) \int^{1}_{-1}\left(\frac{2}{1-s}\right)^{a}\;b(s)(1-s^{2})^{\frac{n-3}{2}}ds\\
\mbox{and}\ \ C_{-}=C(n,p,q,r)\;\left\|b\right\|_{L^{1}(S^{n-1})},
\end{multline*}
for some positive constants $C(n,p,q,r)$ and $a$ described in that reference.  Note that there is a singularity at $s=1$ in the integrand defining $C_+$. However,  this singularity does not create  any problem for the Boltzmann equation since the collision operator can be redefined, by symmetrization,  to this new collision angular kernel
\begin{equation*}
\bar{b}(s):=\left(b(s)+b(-s)\right)\chi\left\{s\leq0\right\}.
\end{equation*}
This symmetrization yields the dependence of the constant $C_{+}$ on $\left\|b\right\|_{L^{1}(S^{n-1})}$ for $n\geq3$.

Finally, observe that  we have used the weak $L^{n/\lambda}$-norm of the soft potential $|u|^{-\lambda}$ in estimate \eqref{cse1}. Also, for any $s>1$ the \textit{weak} $L^{s}$-norm of a function $f$ is defined as
\begin{equation*}
\left\|f\right\|_{L^{s}_{w}(\mathbb{R}^{n})}=\sup_{A}|A|^{-1/s'}\int_{A}|f(x)|dx,
\end{equation*}
where $1/s+1/s'=1$ and $A$ denotes an arbitrary measurable set of finite measure $|A|<\infty$.\\

Next, in order to study the propagation of regularity it is convenient to define for $h>0$ and $\hat{x}\in S^{n-1}$ the following finite difference operator
\begin{equation*}
\left(D_{h,\hat{x}}f\right) (x):=\frac{f(x+h\hat{x})-f(x)}{h}\;,
\end{equation*}
and also define the translation operator
\begin{equation*}
\left(\tau_{h,\hat{x}}f\right)(x):=f(x+h\hat{x}).
\end{equation*}
For notation simplicity we write these operators as $D$ and $\tau$ respectively.

Let $f$ be the distributional solution of the Cauchy Boltzmann problem for either small data or near local Maxwellian case.  Fix $h>0$ and $\hat{x}\in S^{n-1}$ and apply the operator $D$ to both sides of equation (\ref{sde1}) to obtain
\begin{equation}\label{cse1.5}
\frac{d(Df)^{\#}}{dt}(t)=(DQ(f,f))^{\#}(t)=Q^{\#}(Df,f)(t)+Q^{\#}(\tau f,Df)(t).
\end{equation}
Multiplying this equation by
\begin{equation*}
p\ |(Df)^{\#}|^{p-1}\ \mbox{sgn}((Df)^{\#})\ \ \mbox{for}\ \ p>1\ ,
\end{equation*}
and integrating in $\mathbb{R}^{2n}$ we are led to
\begin{equation}\label{cse2}
\frac{d\left\|Df\right\|^{p}_{L^{p}}}{dt}=p\int_{\mathbb{R}^{n}}\int_{\mathbb{R}^{n}}|Df|^{p-1}\mbox{sgn}(Df)\left(Q(Df,f)
+Q(\tau f,Df)\right)dv dx.
\end{equation}
Note first that $\mbox{sgn}(Df)\;Q_{-}(Df,f)\leq0$.  Therefore, using equation (\ref{cse2}) we obtain that for $1<p<\infty$,
\begin{align}\label{cse3}
\frac{d\left\|Df\right\|^{p}_{L^{p}}}{dt}&\leq p\int_{\mathbb{R}^{n}}\left\|Df\right\|^{p-1}_{L^{p}_{v}(\mathbb{R}^{n})}\left(\left\|Q_{+}(Df,f)\right\|_{L^{p}_{v}
(\mathbb{R}^{n})}+
\left\|Q_{+}(\tau f,Df)\right\|_{L^{p}_{v}(\mathbb{R}^{n})}\right.\nonumber\\&\hspace{6.5cm}+\left.\left\|Q_{-}(\tau f,Df)\right\|_{L^{p}_{v}(\mathbb{R}^{n})}\right)dx\nonumber \\
&\leq p\ C\int_{\mathbb{R}^{n}}\left\|Df\right\|^{p}_{L^{p}_{v}(\mathbb{R}^{n})}\left(\left\| f\right\|_{L^{\gamma}_{v}(\mathbb{R}^{n})}+\left\| \tau f\right\|_{L^{\gamma}_{v}(\mathbb{R}^{n})}\right)dx.
\end{align}
In obtaining these inequalities, we have used H\"{o}lder's inequality in the inner integral with exponent $p$ and Theorem~\ref{HLS} with $(p,q,r)=(p,\gamma,p)$ and $(p,q,r)=(\gamma,p,p)$ where $\gamma:=n/(n-\lambda)$.
Moreover, the distributional solution $f(t,x,v)$ is controlled by a traveling Maxwellian, then
\begin{align*}
\left\|f\right\|_{L^{\gamma}_{v}(\mathbb{R}^{n})}\leq \frac{C}{(1+t)^{n/\gamma}}&=\frac{C}{(1+t)^{n-\lambda}}\;,\\
&\mbox{and similarly}\ \ \ \left\|\tau f\right\|_{L^{\gamma}_{v}(\mathbb{R}^{n})}\leq\frac{C}{(1+t)^{n-\lambda}}\;.
\end{align*}
Notice that this estimate is valid for solutions with infinite mass and energy due to estimate (\ref{nme0.5}).
Thus, a Gronwall's argument on (\ref{cse3}) shows that
\begin{equation}\label{cse4}
\left\|Df\right\|_{L^{p}(\mathbb{R}^{2n})}(t)\leq\left\|Df_{0}\right\|_{L^{p}(\mathbb{R}^{2n})}\exp\left(\int^{t}_{0}\tfrac{C}
{(1+s)^{n-\lambda}}
\ ds\right),
\end{equation}
where $C=C(n,p,\lambda,\left\|b\right\|_{L^{1}(S^{n-1})})$.
\begin{Theorem}\label{csexistence}
Fix $0\leq T\leq\infty$ and assume the conditions (A.1)--(A.2) in the collision kernel $B$.  Also, assume that $f_{0}$ satisfies the smallness assumption of Theorem \ref{sdexistence} or is near to a local Maxwellian as in Theorem \ref{nmexistence}.  In addition, assume that $\nabla_x f_{0}\in L^{p}(\mathbb{R}^{2n})$ for some $1<p<\infty$.  Then, there is a unique classical solution $f$ to problem (\ref{ie1}) in the interval $[0,T]$ satisfying the estimates of these theorems, and
\begin{equation}\label{cse5}
\left\|\nabla_x f\right\|_{L^{p}(\mathbb{R}^{2n})}(t)\leq C\left\|\nabla_x f_{0}\right\|_{L^{p}(\mathbb{R}^{2n})}\ \ \ \mbox{for all t }\in [0,T],
\end{equation}
with constant $C=C(n,p,\lambda,\left\|b\right\|_{L^{1}(S^{n-1})})$.
\end{Theorem}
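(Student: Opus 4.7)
The existence and uniqueness of a distributional solution $f$ on $[0,T]$ satisfying (\ref{e1sd}) or (\ref{nme0.49}) are already furnished by Theorem \ref{sdexistence} or Theorem \ref{nmexistence}. What is left is to upgrade $f$ to a classical solution by establishing $\nabla_x f\in L^{1}(0,T;L^{p}(\mathbb{R}^{2n}))$, together with the quantitative bound (\ref{cse5}). My plan is to use the finite-difference computation (\ref{cse1.5})--(\ref{cse4}) that is already essentially carried out in the text as a \emph{uniform-in-$h$} $L^{p}$-bound on $D_{h,\hat{x}}f$, then pass to the limit $h\to 0$ and identify the limit with $\partial_{\hat{x}}f$.

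The finite-difference step is justified as follows. For fixed $h>0$ and $\hat{x}\in S^{n-1}$, apply $D_{h,\hat{x}}$ to the mild formulation $f^{\#}(t)=f_{0}+\int_{0}^{t}Q^{\#}(f,f)(s)\,ds$. The bilinearity of $Q$ together with the translation-invariance identity $D[Q(f,f)]=Q(Df,f)+Q(\tau f,Df)$ yields the integrated form of (\ref{cse1.5}). The passage (\ref{cse1.5})--(\ref{cse3}) is then exactly as in the excerpt: the term $-Q_{-}(Df,f)$ has the favourable sign, H\"older on $\mathbb{R}^{n}_{v}$ together with Theorem \ref{HLS} applied with exponents $(p,\gamma,p)$ and $(\gamma,p,p)$ for $\gamma=n/(n-\lambda)$ reduce matters to the $L^{\gamma}_{v}$-norm of $f(t,x,\cdot)$, which the Maxwellian control from the preceding theorems bounds by $C(1+t)^{-(n-\lambda)}$ uniformly in $x$. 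Since $\lambda<n-1$, the resulting integral $\int_{0}^{\infty}(1+s)^{-(n-\lambda)}ds$ converges, so Gronwall's lemma delivers (\ref{cse4}) with a constant depending neither on $h$ nor on $T$.

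To conclude the theorem I pass to the limit $h\to 0$. Since $\nabla_{x}f_{0}\in L^{p}$ with $1<p<\infty$, an application of Jensen's inequality to $D_{h,\hat{x}}f_{0}(x)=h^{-1}\int_{0}^{h}\hat{x}\cdot\nabla f_{0}(x+s\hat{x})\,ds$ gives $\|D_{h,\hat{x}}f_{0}\|_{L^{p}(\mathbb{R}^{2n})}\le\|\nabla_{x}f_{0}\|_{L^{p}(\mathbb{R}^{2n})}$ uniformly in $h$ and $\hat{x}$. Combined with (\ref{cse4}), this yields a uniform-in-$h$ bound on $D_{h,\hat{x}}f$ in $L^{\infty}(0,T;L^{p}(\mathbb{R}^{2n}))$. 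By reflexivity of $L^{p}$, there is a weak limit along a subsequence, necessarily equal to $\partial_{\hat{x}}f$ in the distributional sense; running this argument along the coordinate directions shows $\nabla_{x}f\in L^{\infty}(0,T;L^{p})\subset L^{1}(0,T;L^{p})$, so condition (ii) of the classical-solution definition is met, and weak lower semicontinuity of the $L^{p}$-norm gives (\ref{cse5}). With $\nabla_{x}f\in L^{p}$ the term $v\cdot\nabla_{x}f$ is pointwise well defined, so the mild formulation (\ref{sde1}) transfers back to the classical equation (\ref{ie1}) almost everywhere; uniqueness is immediate from the uniqueness of distributional solutions already proved, since any classical solution is in particular distributional.

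The main technical obstacle is the rigorous justification of the time-differentiation that produces (\ref{cse2}), because $\partial_{t}(Df)^{\#}$ only lies in $L^{1}(0,T;L^{p}_{x,v})$ and the multiplier $p|Df|^{p-1}\mathrm{sgn}(Df)$ is non-smooth. The standard remedy is to regularise the convex function $z\mapsto|z|^{p}$ by $\phi_{\varepsilon}(z)=(z^{2}+\varepsilon)^{p/2}-\varepsilon^{p/2}$, differentiate $\int_{\mathbb{R}^{2n}}\phi_{\varepsilon}((Df)^{\#})\,dxdv$ in time by the chain rule for Bochner integrals (which is legitimate since $\phi'_{\varepsilon}\in L^{\infty}$ for each $\varepsilon>0$), and then send $\varepsilon\to 0$ by dominated convergence using the $L^{p}$ control already at hand. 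Once this regularity bookkeeping is in place, the entire argument collapses to the transparent calculation of the excerpt together with the short weak-limit passage described above.
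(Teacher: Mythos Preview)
Your proposal is correct and follows essentially the same route as the paper: invoke the distributional solution from Theorem~\ref{sdexistence} or~\ref{nmexistence}, use the finite-difference estimate (\ref{cse4}) derived just before the theorem, pass $h\to 0$, and then use the chain rule to recover the pointwise equation~(\ref{ie1}). You supply more detail on the weak-limit passage and the regularisation of $|z|^{p}$ than the paper does (it simply asserts that (\ref{cse5}) ``follows after sending $h\rightarrow0$'' in (\ref{cse4})), but the underlying argument is the same.
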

\begin{proof}
Thanks to Theorem \ref{sdexistence} (or Theorem \ref{nmexistence}), we have existence of a unique
distributional solution $f$ to the Cauchy Boltzmann problem.  Estimate (\ref{cse5}) for $\nabla_x f$
follows after sending $h\rightarrow0$ in inequality (\ref{cse4}).
After knowing that the weak gradient exist a.e. we can use the chain rule to obtain that for a.e.
$(t,x,v)\in (0,T)\times\mathbb{R}^{2n}$
\begin{equation*}
\frac{df^{\#}}{dt}(t,x,v)=\left(\frac{\partial{f}}{\partial{t}}+v\cdot\nabla_x{f}\right)(t,x+tv,v)=
Q(f,f)(t,x+tv,v).
\end{equation*}
Thus, $f$ solves equation (\ref{ie1}) a.e. in $(0,T)\times\mathbb{R}^{2n}$.
\end{proof}
\noindent\textbf{Remark:} We can also argue in the following way for the small initial data case:
Impose Maxwellian decay on $\nabla_x f_{0}$ and observe that multiplying equation (\ref{cse1.5}) by
$\mbox{sgn}((Df)^{\#}(t))$, integrating the result in $(0,t)$ and using Lemma \ref{sdl1}
\begin{align*}
&\left\|(Df)^{\#}(t)\right\|_{\alpha,\beta}\leq \\
&\ \leq \left\|Df_{0}\right\|_{\alpha,\beta}+k_{\alpha,\beta}
\left\|(Df)^{\#}\right\|_{L^{\infty}(0,T;\mathcal{M}_{\alpha,\beta})}\;\left(\left\|f^{\#}\right\|_{L^{\infty}
(0,T;\mathcal{M}_{\alpha,\beta})}+\left\|(\tau f)^{\#}\right\|_{L^{\infty}(0,T;\mathcal{M}_{\alpha,\beta})}
\right)\\
&\ \leq \left\|Df_{0}\right\|_{\alpha,\beta}+k_{\alpha,\beta}\left\|(Df)^{\#}\right\|_{L^{\infty}(0,T;
\mathcal{M}_{\alpha,\beta})}\left\|f^{\#}\right\|_{L^{\infty}(0,T;\mathcal{M}_{2\alpha,\beta})}
\left(1+\exp\left(2\alpha\; h^{2}\right)\right).
\end{align*}
Fix $\left\|f_{0}\right\|_{2\alpha,\beta}\leq \frac{3}{16k_{2\alpha,\beta}}$ and use Theorem
\ref{sdexistence} to obtain that the distributional solution fulfils the estimate
\begin{equation*}
\left\|f^{\#}\right\|_{L^{\infty}(0,T;\mathcal{M}_{2\alpha,\beta})}\leq\frac{1}{4k_{2\alpha,\beta}}\;.
\end{equation*}
But $\frac{k_{\alpha,\beta}}{k_{2\alpha,\beta}}=\sqrt{2}$, therefore,
\begin{equation*}
\left\|(Df)^{\#}\right\|_{L^{\infty}(0,T;\mathcal{M}_{\alpha,\beta})}\leq \frac{4\;\left\|Df_{0}
\right\|_{\alpha,\beta}}{4-\sqrt{2}\left(1+\exp\left(2\alpha\; h^{2}\right)\right)}\;.
\end{equation*}
Send $h\rightarrow0$ to conclude that the distributional solution satisfies
\begin{equation*}
\left\|(\nabla_x f)^{\#}\right\|_{L^{\infty}(0,T;\mathcal{M}_{\alpha,\beta})}\leq \frac{2\;\left\|
\nabla_x f_{0}\right\|_{\alpha,\beta}}{2-\sqrt{2}}\;.
\end{equation*}
This procedure yields a classical solution to problem (\ref{ie1}) with Maxwellian decay in its gradient
provided small initial data.  Thus, an iterative use of the argument above can produce classical solutions
as smooth as desired on the condition that the initial datum is smooth with its derivatives having
Maxwellian decay.  Observe that the process is valid for both soft and hard potentials and only uses
the integrability of $b$.

\medskip

\subsection{Velocity regularity}
In this subsection we investigate briefly the propagation of velocity smoothness for the classical
solutions obtained above.  To this end we use the finite difference operator for the $v$-variable
\begin{equation*}
\left(D_{h,\hat{v}}f\right) (v):=\frac{f(v+h\hat{v})-f(v)}{h}\;,
\end{equation*}
for a fix $h>0$ and $\hat{v}\in S^{n-1}$.  Similarly for the translation operator $\tau_{h,\hat{v}}$.
As above, we write such operators as $D$ and $\tau$ for notation simplicity.  Take a classical solution o
f the Cauchy Boltzmann problem $f$ and apply the finite difference operator in equation (\ref{ie1}) to obtain
\begin{multline*}
\frac{d(Df)}{dt}(t)+v\cdot\nabla_x (Df)(t)+\hat{v}\cdot\nabla_x(\tau f)(t)\\=DQ(f,f)(t)=Q(Df,f)(t)+Q(\tau f,Df)(t).
\end{multline*}
This equality follows after using the change of variables $v_{*}\rightarrow v_{*}+h\hat{v}$ in the
collision operator.  Moreover, multiplying this equation by
\begin{equation*}
p\ |(Df)|^{p-1}\ \mbox{sgn}((Df))\ \ \mbox{for}\ \ p>1\ ,
\end{equation*}
and integrating in $\mathbb{R}^{2n}$, we obtain
\begin{align}\label{cse6}
\frac{d\left\|Df\right\|^{p}_{L^{p}}}{dt}(t)&\leq p\int_{\mathbb{R}^{n}}\int_{\mathbb{R}^{n}}|Df|^{p-1}
\mbox{sgn}(Df)\left(Q(Df,f)+Q(\tau f,Df)\right)dv dx\nonumber\\
&\ \hspace{4.2cm} +p\int_{\mathbb{R}^{n}}\int_{\mathbb{R}^{n}}|
\nabla_x(\tau f)||Df|^{p-1}dv dx\nonumber\\
&\leq p\ C\int_{\mathbb{R}^{n}}\left\|Df\right\|^{p}_{L^{p}_{v}(\mathbb{R}^{n})}
\left( \left\|f\right\|_{L^{\gamma}_{v}(\mathbb{R}^{n})}+\left\|\tau f\right\|_{L^{\gamma}_{v}
(\mathbb{R}^{n})}\right)dx \nonumber\\
&\ \hspace{4.2cm} +p\left\|Df\right\|^{p-1}_{L^{p}(\mathbb{R}^{2n})}\left\|\nabla_x(\tau f)\right\|_{L^{p}
(\mathbb{R}^{2n})}\nonumber\\
&\leq \frac{p\ C}{(1+t)^{n-\lambda}}\left\|Df\right\|^{p}_{L^{p}(\mathbb{R}^{2n})}+p\left\|Df\right\|^{p-1}_{L^{p}
(\mathbb{R}^{2n})}\left\|\nabla_x f\right\|_{L^{p}(\mathbb{R}^{2n})}.
\end{align}
In order to quantify the size of the $L^{p}$-norm of $Df$ let $X(t):=\left\|Df\right\|^{p}_{L^{p}
(\mathbb{R}^{2n})}(t)$, then inequality (\ref{cse6}) takes the classical non-linear Ordinary Differential inequality  form
(Bernoulli type)
\begin{equation*}
\frac{dX(t)}{dt}\leq {\tt a}(t)X(t)+{\tt b}(t)X^{\frac{p-1}{p}}(t)\;,
\end{equation*}
with
\begin{equation*}
{\tt a}(t)=\frac{p\ C}{(1+t)^{n-\lambda}}\ \ \ \mbox{and}\ \ \ \ {\tt b}(t)=p\left\|(\nabla_x f)(t)\right\|^{p-1}_{L^{p}(\mathbb{R}^{2n})}.
\end{equation*}
Therefore,
\begin{equation*}
X^{\tfrac{1}{p}}(t)\leq X^{\tfrac{1}{p}}_{0}\exp\left(\tfrac{1}{p}\int^{t}_{0}{\tt a}(s)ds\right)+\frac{1}{p}\int^{t}_{0}
\exp\left(\tfrac{1}{p}\int^{t}_{r}{\tt a}(s)ds\right){\tt b}(r)dr,
\end{equation*}
hence, using  estimate (\ref{cse5})
\begin{equation*}
\left\|Df\right\|_{L^{p}(\mathbb{R}^{2n})}(t)\leq\left(\left\|Df_{0}\right\|_{L^{p}(\mathbb{R}^{2n})}+t\left\|\nabla_x f_{0}\right\|_{L^{p}(\mathbb{R}^{2n})}\right)\exp\left(\int^{t}_{0}\tfrac{C}{1+s^{n-\lambda}}ds\right).
\end{equation*}
Thus, letting $h\rightarrow0$ we conclude the following theorem.
\begin{Theorem}\label{T6}
Let $f$ be a classical solution in $[0,T]$ with $f_{0}$ satisfying the condition of Theorem \ref{sdexistence} or Theorem \ref{nmexistence} and $\nabla_{x} f_{0}\in L^{p}(\mathbb{R}^{2n})$ for some $1<p<\infty$.  In addition assume that $\nabla_{v} f_{0}\in L^{p}(\mathbb{R}^{2n})$.  Then, $f$ satisfies the  estimate
\begin{equation}\label{cse7}
\left\|(\nabla_{v} f)(t)\right\|_{L^{p}(\mathbb{R}^{2n})}\leq C\left(\left\|\nabla_{v} f_{0}\right\|_{L^{p}(\mathbb{R}^{2n})}+t\left\|\nabla_{x} f_{0}\right\|_{L^{p}(\mathbb{R}^{2n})}\right),
\end{equation}
with $C=C(n,p,\lambda,\left\|b\right\|_{L^{1}(S^{n-1})})$ independent of  time.
\end{Theorem}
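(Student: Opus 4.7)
The plan is to adapt the argument already sketched in the paragraphs preceding the statement, applying the finite difference operator $D=D_{h,\hat{v}}$ to the Boltzmann equation (\ref{ie1}) itself rather than to its trajectory-pullback version. Because $D$ acts in velocity, differentiating the transport term $v\cdot\nabla_x f$ produces an extra inhomogeneous term $\hat{v}\cdot\nabla_x(\tau f)$, which cannot be absorbed into the left-hand side and will be the source of the linear-in-$t$ growth in (\ref{cse7}).

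First I would multiply the resulting identity
\[
\frac{d(Df)}{dt}+v\cdot\nabla_x(Df)+\hat{v}\cdot\nabla_x(\tau f)=Q(Df,f)+Q(\tau f,Df)
\]
by $p|Df|^{p-1}\mathrm{sgn}(Df)$ and integrate over $\mathbb{R}^{2n}$. The term $v\cdot\nabla_x(Df)$ vanishes upon integration (it is an $x$-divergence of $|Df|^p/p$), the negative part of the collision operator contributes a nonpositive term, and the positive contributions from $Q_{\pm}(Df,f)$ and $Q_{\pm}(\tau f,Df)$ are controlled by Theorem \ref{HLS} with exponent triple $(p,\gamma,p)$ where $\gamma=n/(n-\lambda)$, exactly as in (\ref{cse3}). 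The new ingredient is the transport remainder, which I would estimate by H\"older's inequality in $(x,v)$ to get $\|Df\|_{L^p}^{p-1}\|\nabla_x(\tau f)\|_{L^p}=\|Df\|_{L^p}^{p-1}\|\nabla_x f\|_{L^p}$.

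Combining these estimates with the Maxwellian control $\|f(t,x,\cdot)\|_{L^\gamma_v}\leq C(1+t)^{-(n-\lambda)}$ inherited from Theorems \ref{sdexistence} and \ref{nmexistence}, one obtains for $X(t)=\|Df\|_{L^p(\mathbb{R}^{2n})}^p(t)$ the Bernoulli-type inequality
\[
X'(t)\leq \mathtt{a}(t)X(t)+\mathtt{b}(t)X^{(p-1)/p}(t),
\]
with $\mathtt{a}(t)=pC(1+t)^{-(n-\lambda)}$ integrable on $[0,\infty)$ since $n-\lambda>1$, and $\mathtt{b}(t)=p\|\nabla_x f(t)\|_{L^p}^{p-1}$ uniformly bounded on $[0,T]$ thanks to Theorem \ref{csexistence}. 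Integrating this inequality by the standard substitution $Y=X^{1/p}$ yields
\[
X^{1/p}(t)\leq X_0^{1/p}\exp\!\Bigl(\tfrac1p\!\int_0^t\!\mathtt{a}\Bigr)+\tfrac1p\!\int_0^t\!\exp\!\Bigl(\tfrac1p\!\int_r^t\!\mathtt{a}\Bigr)\mathtt{b}(r)\,dr,
\]
so that the uniform-in-time bound on $\|\nabla_x f\|_{L^p}$ produces precisely the linear-in-$t$ factor multiplying $\|\nabla_x f_0\|_{L^p}$.

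The main (and essentially only) obstacle beyond the $x$-derivative argument is handling the inhomogeneous source $\hat{v}\cdot\nabla_x(\tau f)$: it prevents a pure Gronwall closure and forces the Bernoulli step, while simultaneously requiring the quantitative $L^p$ bound on $\nabla_x f$ from Theorem \ref{csexistence}. The proof then concludes by sending $h\to 0^+$ so that $D_{h,\hat{v}}f\to \hat{v}\cdot\nabla_v f$ in $L^p$, and taking the supremum over $\hat{v}\in S^{n-1}$ to recover (\ref{cse7}) with a constant $C=C(n,p,\lambda,\|b\|_{L^1(S^{n-1})})$ independent of time.
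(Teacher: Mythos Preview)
Your proposal is correct and follows essentially the same route as the paper: apply the velocity finite-difference operator to (\ref{ie1}), multiply by $p|Df|^{p-1}\mathrm{sgn}(Df)$, integrate, control the collision terms via Theorem~\ref{HLS} with $(p,\gamma,p)$ and the Maxwellian bound, treat the transport remainder $\hat{v}\cdot\nabla_x(\tau f)$ by H\"older, solve the resulting Bernoulli inequality using (\ref{cse5}), and pass to the limit $h\to0$. One minor slip (shared with the paper's text) is that the forcing coefficient should be $\mathtt{b}(t)=p\|\nabla_x f(t)\|_{L^p}$ rather than $p\|\nabla_x f(t)\|_{L^p}^{p-1}$, but this does not affect the argument or the final estimate.
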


\medskip

\subsection{$L^{p}$ and $\mathcal{M}_{\alpha,\beta}$ stability}
We finish this last section by presenting a short discussion on the stability of solutions in the $L^{p}$ ($1\leq p\leq\infty$) and $\mathcal{M}_{\alpha,\beta}$ spaces also for $S^{n-1}$ integrability in the
angular part $b(\hat{u}\cdot\sigma)$ of the collision kernel (Grad cut-off assumption).  We refer to \cite{ha0} and \cite{ha1} for a discussion on the $L^{1}$ stability for solutions near vacuum and near Maxwellian for soft and hard potentials under the pointwise condition on the angular part of the collision kernel given by $b(\hat{u}\cdot\sigma)\leq K\cos(\hat{u}\cdot\sigma)$.\\

First, take $f$ and $g$ solutions of the Boltzmann Cauchy problem associated to the initial data $f_{0}$ and $g_{0}$ respectively.  These data fulfill the condition of Theorem \ref{sdexistence} (or Theorem \ref{nmexistence}) so that $f$ and $g$ are controlled by Maxwellian distributions as described in these theorems.  Thus,
\begin{equation*}
\frac{d(f-g)^{\#}}{dt}(t)=Q^{\#}(f,f)(t)-Q^{\#}(g,g)(t)=\frac{1}{2}\left[Q^{\#}(f-g,f+g)-Q^{\#}(f+g,f-g)\right].
\end{equation*}
After multiplying by $|(f-g)^{\#}|^{p-1}\mbox{sgn}((f-g)^{\#})$ with $1<p<\infty$ and following the usual steps we arrive to
\begin{equation*}
\frac{d\left\|f-g\right\|^{p}_{L^{p}}}{dt}(t)\leq C\int_{\mathbb{R}^{n}}\left\|f-g\right\|^{p}_{L^{p}_{v}(\mathbb{R}^{n})}\left\|f+g\right\|_{L^{\gamma}_{v}(\mathbb{R}^{n})}dx.
\end{equation*}
Since $f$ and $g$ are controlled by traveling Maxwellian distributions
\begin{equation}\label{stab1}
\left\|f+g\right\|_{L^{\gamma}_{v}(\mathbb{R}^{n})}\leq \frac{C}{(1+t)^{n-\lambda}}\;.
\end{equation}
Therefore, an application of Gronwall's lemma leads to the following Theorem.

\begin{Theorem}\label{ST}
Let $f$ and $g$ distributional solutions of problem (\ref{ie1}) associated to the initial data $f_{0}$ and $g_{0}$ respectively.  Assume that these data satisfie the condition of Theorem \ref{sdexistence} (or Theorem \ref{nmexistence}).  Then, there exist a constant
$C>0$ independent of time such that
\begin{equation}\label{e1ST}
\left\|f-g\right\|_{L^{p}(\mathbb{R}^{2n})}\leq C\;\left\|f_{0}-g_{0}\right\|_{L^{p}(\mathbb{R}^{2n})}\ \ \ \mbox{with}\ \ \ 1\leq p\leq\infty.
\end{equation}
Moreover, for $f_{0}$ and $g_{0}$ sufficiently small in $\mathcal{M}_{\alpha,\beta}$ it holds
\begin{equation}\label{e2ST}
\left\|(f-g)^{\#}\right\|_{L^{\infty}(0,T;\mathcal{M}_{\alpha,\beta})}\leq C\;\left\|f_{0}-g_{0}\right\|_{L^{\infty}(0,T;\mathcal{M}_{\alpha,\beta})}.
\end{equation}
\end{Theorem}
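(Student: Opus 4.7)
For $1<p<\infty$ the differential inequality already displayed in the excerpt, combined with \eqref{stab1} and the fact that $n-\lambda>1$ under (A.1), gives an integrable Gronwall coefficient on $[0,\infty)$. Thus
$$
\|f-g\|^p_{L^p}(t)\leq\|f_0-g_0\|^p_{L^p}\exp\Bigl(pC\int_0^\infty(1+s)^{-(n-\lambda)}\,ds\Bigr),
$$
which yields \eqref{e1ST} with a constant independent of $T$.

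The endpoints $p=1,\infty$ cannot be handled this way since Theorem~\ref{HLS} requires $1<p,q,r<\infty$. For $p=1$ I would integrate $|(f-g)^\#|$ directly over $\mathbb{R}^{2n}$ using the symmetric splitting
$$
Q(f,f)-Q(g,g)=\tfrac12\bigl[Q(f-g,f+g)+Q(f+g,f-g)\bigr]
$$
and the classical pre/post-collisional identity $\int Q_\pm(h,k)\,dv=\int h\,R(k)\,dv$ (valid under Grad's cutoff), leading to
$$
\frac{d}{dt}\|f-g\|_{L^1}\leq C\!\int|f-g|\,R(|f|+|g|)\,dx\,dv\leq C(1+t)^{-(n-\lambda)}\|f-g\|_{L^1},
$$
where the last bound uses the traveling Maxwellian control of $f,g$ to bound $R(f+g)$ in $L^\infty_v$. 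Gronwall then closes $p=1$. For $p=\infty$ I would work pointwise on the mild form $(f-g)^\#(t)=(f_0-g_0)+\int_0^t[Q^\#(f,f)-Q^\#(g,g)]\,ds$ and estimate each term, after the same symmetric splitting, by Lemma~\ref{sdl1}; the same dispersion identity $\int_0^\infty e^{-\alpha|x+su|^2}\,ds\leq\sqrt{\pi/\alpha}\,|u|^{-1}$ also bounds $\int_0^t|Q_-^\#|\,ds$ by the same constant $k_{\alpha,\beta}$ once the Maxwellian factor coming from $f^\#$ is pulled out.

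For \eqref{e2ST} I would again use the mild formulation for $(f-g)^\#$ together with Lemma~\ref{sdl1} (extended to $Q_-^\#$ as above), to obtain
$$
\|(f-g)^\#\|_{L^\infty_t\mathcal{M}_{\alpha,\beta}}\leq\|f_0-g_0\|_{\alpha,\beta}+2k_{\alpha,\beta}\,\|(f+g)^\#\|_{L^\infty_t\mathcal{M}_{\alpha,\beta}}\,\|(f-g)^\#\|_{L^\infty_t\mathcal{M}_{\alpha,\beta}}.
$$
Smallness of $f_0,g_0$ in $\mathcal{M}_{\alpha,\beta}$ through Theorem~\ref{sdexistence} makes the prefactor strictly less than $1$, so the last term is absorbed on the left and \eqref{e2ST} follows with a time-independent constant.

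The main obstacle is the endpoint $p=\infty$: Theorem~\ref{HLS} is unavailable, and one must instead control the difference trajectory-pointwise with a constant independent of $T$. This forces reliance on the dispersion mechanism underlying Lemma~\ref{sdl1}, together with an analogous estimate for $Q_-^\#$ (Lemma~\ref{sdl1} is stated only for $Q_+^\#$), which is the short but essential verification needed to close the argument.
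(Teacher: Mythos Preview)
Your treatment of $1<p<\infty$ and of \eqref{e2ST} coincides with the paper's: the Gronwall argument for the open range is exactly the displayed computation preceding the theorem, and your absorption argument via Lemma~\ref{sdl1} and smallness is precisely what the paper means by ``direct consequence of Lemma~\ref{sdl1} and estimate~\eqref{e1sd}.'' Your $p=1$ argument (the weak identity $\int Q_{\pm}(h,k)\,dv=\int h\,R(k)\,dv$ together with the $L^\infty_v$ bound $R(f+g)\lesssim(1+t)^{-(n-\lambda)}$) is different from the paper's but correct and arguably more elementary.

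The gap is at $p=\infty$. Your plan is to adapt Lemma~\ref{sdl1} by ``pulling out the Maxwellian factor coming from $f^{\#}$'' and bounding $f-g$ merely in $L^\infty$. This does not close: Lemma~\ref{sdl1} relies on the algebraic identity $|x+\tau(v-{'\!v})|^{2}+|x+\tau(v-{'\!v_*})|^{2}=|x|^{2}+|x+\tau u|^{2}$, which requires a spatial Gaussian on \emph{both} collision partners. With only one Maxwellian factor (from $f+g$) you are left, after the time integration, with angular weights $|v-{'\!v_*}|^{-1}=|u|^{-1}(\sin\theta)^{-1}$ or $|v-{'\!v}|^{-1}=|u|^{-1}|\cos\theta|^{-1}$ that are not controlled by $b\in L^1(S^{n-1})$ alone; and for the loss term $Q_-(f+g,f-g)=(f+g)R(f-g)$ the situation is worse, since $R(f-g)$ is infinite when $f-g$ is only $L^\infty$. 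So the dispersion route cannot produce the unweighted $L^\infty$ estimate.

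The paper avoids this by a device you overlooked: item \textit{(ii)} of Theorem~\ref{HLS} \emph{does} reach the endpoints $1\le p,q,r\le\infty$, provided the kinetic potential lies in a genuine $L^s$ space. The soft potential $|u|^{-\lambda}$ does not, but its truncation does. The paper splits $|u|^{-\lambda}=\Phi_1+\Phi_2$ with $\Phi_1\in L^s$ for every $s<n/\lambda$ and $\Phi_2\in L^\infty$, and then applies \eqref{cse1.1} with $(p,q,s,r)=(p,s',s,p)$ and $(p,q,s,r)=(p,1,\infty,p)$ for $p\in\{1,\infty\}$. This yields
\[
\|Q(f-g,f+g)\|_{L^p_v}\le C\bigl(\|f+g\|_{L^{s'}_v}+\|f+g\|_{L^1_v}\bigr)\|f-g\|_{L^p_v},
\]
and choosing $s\in(n/(n-1),n/\lambda)$ makes the coefficient $(1+t)^{-n/s'}$ integrable. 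This single splitting handles both endpoints uniformly and replaces your separate $p=1$ argument as well.
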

\begin{proof}
The argument preceding \eqref{stab1} shows the theorem holds for $p\in (1,\infty)$.
Thus it remains to prove  estimate (\ref{e1ST}) just for the cases $p=1$ or $p=\infty$.

To this end we split the soft potential in two radially symmetric and decreasing potentials
\begin{equation*}
|u|^{-\lambda}=\Phi_{1}(u)+\Phi_{2}(u),
\end{equation*}
where $\Phi_{1}(u)\in L^{s}(\mathbb{R}^{n})$ for any $1\leq s< n/\lambda$ and $\Phi_{2}(u)\in L^{\infty}$.  For instance,
\begin{equation*}
\Phi_1(u)=\left(|u|^{-\lambda}-1\right)\chi_{\{|u|\leq1\}}\ \ \mbox{and}\ \ \ \Phi_2(u)=|u|^{-\lambda}-\Phi_1.
\end{equation*}
Then, using (\ref{cse1.1}) in Theorem 4.1 with $(p,q,s,r)=(1,s',s,1)$ and $(p,q,s,r)=(\infty,s',s,\infty)$ we have for any $1\leq s <n/\lambda$
\begin{align*}
\left\|Q(f-g,f+g)\right\|_{L^{p}_{v}(\mathbb{R}^{n})}&\leq  \left\|Q_{\Phi_1}(f-g,f+g)\right\|_{L^{1}_{v}(\mathbb{R}^{n})}+\left\|Q_{\Phi_2}(f-g,f+g)\right\|_{L^{p}_{v}(\mathbb{R}^{n})}\\
&\leq C_s\left\|\Phi_{1}\right\|_{L^{s}_{v}(\mathbb{R}^{n})}\left\|f+g\right\|_{L^{s'}_{v}(\mathbb{R}^{n})}\left\|f-g\right\|_{L^{p}_{v}
(\mathbb{R}^{n})}+\\&\hspace{2.5cm}C\left\|\Phi_{2}\right\|_{L^{\infty}_{v}(\mathbb{R}^{n})}\left\|f+g\right\|_{L^{1}_{v}(\mathbb{R}^{n})}
\left\|f-g\right\|_{L^{p}_{v}(\mathbb{R}^{n})},
\end{align*}
 where $1/s+1/s'=1$.  Using the control by traveling Maxwellians we have
\begin{equation*}
\left\|f+g\right\|_{L^{s'}_{v}(\mathbb{R}^{n})}\leq\frac{C}{(1+t)^{n/s'}}\ \ \ \mbox{and}\ \ \ \ \left\|f+g\right\|_{L^{1}_{v}(\mathbb{R}^{n})}\leq\frac{C}{(1+t)^{n}}.
\end{equation*}
Hence, after previously described computations, we obtain the estimate
\begin{equation*}
\left\|f-g\right\|_{L^{p}}(t)\leq \left\|f_0-g_0\right\|_{L^{p}}+\int^{t}_{0}\frac{C_s}{(1+\tau)^{n/s'}}\left\|f-g\right\|_{L^{p}(\mathbb{R}^{2n})}(\tau)\;d\tau.
\end{equation*}
Choosing any $s\in(\frac{n}{n-1},\frac{n}{\lambda})$ one has that $n/s'>1$. Therefore Gronwall's lemma implies that
\begin{equation*}
\left\|f-g\right\|_{L^{p}(\mathbb{R}^{2n})}\leq C\;\left\|f_0-g_0\right\|_{L^{p}(\mathbb{R}^{2n})}
\end{equation*}
with $C$ independent of time.\\
The proof of estimate (\ref{e2ST}) is a direct consequence of Lemma \ref{sdl1} and estimate (\ref{e1sd}).
\end{proof}

\noindent\textbf{Remark:} Clearly, Theorem \ref{ST} gives uniqueness of solutions $f^{\#}\in L^{\infty}(0,T;\mathcal{M}_{\alpha,\beta})$ (for $\alpha>0$, $\beta>0$). So, in particular, solutions constructed in \cite{goudon} are unique.

\end{document}